\newtheorem{theorem}{Theorem}[section]
\newtheorem{corollary}{Corollary}[section]
\newskip\theorempreskipamount
\newskip\theorempostskipamount
\newenvironment{proof}{\noindent {\bf Proof.  }}{\hfill$\Box$}
\newenvironment{proofof}[1]{\smallskip
\noindent {\bf Proof of #1.  }}{\hfill$\Box$}
\newtheorem{lemma}{Lemma}[section]
\newtheorem{claim}{Claim}[section]
\newcommand{\setdiff}{\backslash}
\newcommand{\reals}{\mathbb{R}\xspace}
\newcommand{\Sn}{S_n\xspace}
\DeclareMathOperator*{\argmin}{arg\,min\xspace}
\DeclareMathOperator*{\argmax}{arg\,max\xspace}
\newcommand{\dotp}[2]{#1 \dotp #2\xspace}
\newcommand{\frob}[1]{\|#1\|_F\xspace}
\newcommand{\trace}[1]{\operatorname{Tr}(#1)}
\newcommand{\tracepd}[2]{\operatorname{Tr}(#1,#2)}
\newcommand{\innerpd}[3]{\langle #1,#3 \rangle_{#2}}
\newcommand{\specA}{A = U \Lambda U^T}
\newcommand{\specB}{B = V \Gamma V^T}
\newcommand{\specprA}{A = U' \Lambda {U'}^T}
\newcommand{\NP}{\textsf{NP}\xspace}
\newcommand{\qap}{\textsc{QAP}\xspace}
\newcommand{\bigO}{\mathcal{O}}
\newcommand{\bigOstar}{\bigO^*}
\newcommand{\rtime}{$\bigOstar({n^{\mathcal{O}(kp^2)}})$\xspace}
\newcommand{\qvp}{\textsc{QVP}\xspace}
\newcommand{\realmat}[1]{\reals^{#1 \times #1}}
\newcommand{\RR}{{\mathbb R}}
\newcommand{\dist}{\operatorname{dist}}
\newcommand{\SIM}{\textsc{GSim}}
\newcommand{\WSIM}{\textsc{WSim}}
\newcommand{\MSIM}{\textsc{MSim}}
\newcommand{\QAP}{\textsc{QAP}}
\newcommand{\maxQAP}{\textsc{max-QAP}}
\newcommand{\rank}{\operatorname{rank}}
\newcommand{\rows}{\operatorname{rows}}
\title{Graph Similarity and Approximate Isomorphism}
\author{Martin Grohe, Gaurav Rattan, Gerhard J. Woeginger 
\\ RWTH Aachen University, Germany 
\\ \normalsize\{grohe,rattan,woeginger\}@informatik.rwth-aachen.de}
\date{}
\begin{document}
\maketitle
\sloppy

\begin{abstract} 
The \emph{graph similarity} problem, also known as approximate graph
isomorphism or graph matching problem, has been extensively studied
in the machine learning community, but has not received much
attention in the algorithms community:
Given two graphs $G,H$ of the same order $n$ with adjacency matrices 
$A_G,A_H$, a well-studied measure of similarity is the \emph{Frobenius distance}
\[
\dist(G,H):=\min_{\pi}\|A_G^\pi-A_H\|_F,
\]
where $\pi$ ranges over all permutations of the vertex set of $G$, where 
$A_G^\pi$ denotes the matrix obtained from $A_G$ by permuting rows and 
columns according to $\pi$, and where $\|M\|_F$ is the Frobenius norm of 
a matrix $M$. The (weighted) graph similarity problem, denoted by $\SIM$ 
($\WSIM$), is the problem of computing this distance for two graphs of 
same order. This problem is closely related to the notoriously hard 
\emph{quadratic assignment problem} ($\QAP$), which is known to be 
\NP-hard even for severely restricted cases. 

It is known that $\SIM$ ($\WSIM$) is \NP-hard; we strengthen this
hardness result by showing that the problem remains \NP-hard even
for the class of trees. Identifying the boundary of tractability for
$\WSIM$ is best done in the framework of linear algebra. 
We show that $\WSIM$ is \NP-hard as long as one of
the matrices has unbounded rank or negative eigenvalues: hence,
the realm of tractability is restricted to positive 
semi-definite matrices of bounded rank. 
Our main result is a polynomial time algorithm 
for the special case where one of the matrices has 
a bounded \emph{clustering number}, a parameter arising from 
spectral graph drawing techniques.
\end{abstract}


\section{Introduction}\label{intro}
Graph isomorphism has been a central open problem in algorithmics for the last 50 years. 
The question of whether graph isomorphism is in polynomial time is still wide open, but 
at least we know that it is in quasi-polynomial time \cite{bab16}. 
On the practical side, the problem is largely viewed as solved; there are excellent tools
\cite{codkatsakmar13,junkas07,mck81,mckpip14} that efficiently decide isomorphism on all 
but very contrived graphs \cite{neuschwe17}. 
However, for many applications, notably in machine learning, we only need to know whether 
two graphs are ``approximately isomorphic'', or more generally, how ``similar'' they are. 
The resulting \emph{graph similarity} problem has been extensively studied in the machine 
learning literature under the name \emph{graph matching}
(e.g.~\cite{almduf93,confogsanven04,golran96,ume88,zasbacver09}), and also in the context 
of the schema matching problem in database systems (e.g. \cite{melgarrah02}). 
Given the practical significance of the problem, surprisingly few theoretical results are known. 
Before we discuss these known and our new results, let us state the problem formally.

\paragraph{Graph similarity.}
It is not obvious how to define the distance between two graphs, but the distance measure that
we study here seems to be the most straightforward one, and it certainly is the one that has
been studied most. 
For two $n$-vertex graphs $G$ and $H$ with adjacency matrices $A_G$ and $A_H$, we define the 
\emph{Frobenius distance} between $G$ and $H$ to be
\begin{equation}
  \label{eq:1}
  \dist(G,H):=~ \min_{\pi}\|A_G^\pi-A_H\|_F.
\end{equation}
Here $\pi$ ranges over all permutations of the vertex set of $G$,
$A_G^\pi$ denotes the matrix obtained from $A_G$ by permuting rows and
columns according to $\pi$, and the norm
$\|M\|_F:=\sqrt{\sum_{i,j}M_{ij}^2}$ is the \emph{Frobenius norm} of a
matrix $M=(M_{ij})$.  Note that $\dist(G,H)^2$ counts the number of
edge mismatches in an optimal alignment of the two graphs.  The
\emph{graph similarity problem}, denoted by $\SIM$, is the problem of
computing $\dist(G,H)$ for graphs $G,H$ of the same order, or,
depending on the context, the decision version of this problem (decide
whether $\dist(G,H)\le d$ for a given $d$).  We can easily extend the
definitions to weighted graphs and denote the \emph{weighted graph
  similarity problem} by $\WSIM$.  In practice, this is often the more
relevant problem.  Instead of the adjacency matrices of graphs, we may
also use the Laplacian matrices of the graphs to define distances.
Recall that the \emph{Laplacian matrix} of a graph $G$ is the matrix
$L_G:=D_G-A_G$, where $D_G$ is the diagonal matrix in which the entry
$(D_G)_{ii}$ is the degree of the $i$th vertex, or in the weighted
case, the sum of the weights of the incident edges.  Let
$\dist_L(G,H):=\min_{\pi}\|L_G^\pi-L_H\|_F$ be the corresponding
distance measure.  Intuitively, in the definition of $\dist_L(G,H)$ we
prefer permutations that map vertices of similar degrees onto one
another.  Technically, $\dist_L(G,H)$ is interesting, because the
Laplacian matrices are positive semidefinite (if the weights are
nonnegative).  Both the (weighted) similarity problem and its version
for the Laplacian matrices are special cases of the problem $\MSIM$ of
computing $\min_{P}\|A-PBP^{-1}\|_F$ for given symmetric matrices
$A,B\in\RR^{n\times n}$.  In the Laplacian case, these matrices are
positive semidefinite.\footnote{Note that the notion of similarity
  that we use here has nothing to do with the standard notion of
  ``matrix similarity'' from linear algebra.}

\paragraph{The QAP.}
The graph similarity problem is closely related to \emph{quadratic assignment problem ($\QAP$)} \cite{Cela-qap1998}: 
given two $(n\times n)$-matrices $A,B$, the goal is to find a permutation $\pi\in S_n$ that minimizes 
$\sum_{i,j}A_{ij}B_{\pi(i)\pi(j)}$. 
The usual interpretation is that we have $n$ \emph{facilities} that we want to assign to $n$ \emph{locations}. 
The entry $A_{ij}$ is the \emph{flow} from the $i$th to the $j$th facility, and the entry $B_{ij}$ is the
\emph{distance} from the $i$th to the $j$th location. 
The goal is to find an assignment of facilities to locations that minimizes the total cost, where the cost 
for each pair of facilities is defined as the flow times the distance between their locations. 
The $\QAP$ has a large number of real-world applications, as for instance
hospital planning \cite{Elshafei1977},
typewriter keyboard design \cite{PoGeRa1976},
ranking of archeological data \cite{KrPr1978}, and
scheduling parallel production lines \cite{GeGr1976}.
On the theoretical side, the $\QAP$ contains well-known optimization problems as special cases, 
as for instance the Travelling Salesman Problem, 
the feedback arc set problem, 
the maximum clique problem,
and all kinds of problems centered around graph partitioning, graph embedding, and graph packing.

In the maximization version $\maxQAP$ of $\QAP$ the objective is to maximize 
$\sum_{i,j}A_{ij}B_{\pi(i)\pi(j)}$ (see \cite{makmansvi14,nagsvi09}).
Both $\QAP$ and $\maxQAP$ are notoriously hard combinatorial optimization problems, in terms of
practical solvability \cite{ReWo1992} as well as in terms of theoretical hardness results even 
for very restricted special cases \cite{qap-1,qap-2,qap-3}.
It is easy to see that $\MSIM$ is equivalent to $\maxQAP$, because in reductions between $\QAP$ 
and $\MSIM$ the sign of one of the two matrices is flipped.
Most of the known results for $\SIM$ and its variants are derived from results for $\textsc{(max)QAP}$.

\paragraph{Previous Work.}
It seems to be folklore knowledge that $\SIM$ is NP-complete. 
For example, this can be seen by a reduction from the Hamiltonian path problem: take $G$ to be the 
$n$-vertex input graph and $H$ a path of length $n$; then $\dist(G,H)\le\sqrt{|E(G)|-n}$ if and
only if $G$ has a Hamiltonian path. 
By the same argument, we can actually reduce the subgraph isomorphism problem to $\SIM$.
Arvind, K\"obler, Kuhnert, and Vasudev~\cite{arvkobkuhvas12} study several versions of what they 
call \emph{approximate graph isomorphism}; their problem \textsc{Min-PGI} is the same as our $\SIM$. 
They prove various hardness of approximation results. 
Based on an earlier QAP-approximation algorithm due to Arora, Frieze, and Kaplan~\cite{arofrikap02}, 
they also obtain a quasi-polynomial time approximation algorithm for the related problem \textsc{Max-PGI}.
Further hardness results were obtained by Makarychev, Manokaran, and Sviridenko \cite{makmansvi14} 
and O'Donnell, Wright, Wu, and Zhou \cite{odowriwu+14}, who prove an average case hardness result for 
a variant of $\SIM$ problem that they call \emph{robust graph isomorphism}. 
Keldenich~\cite{master-kel} studied the similarity problem for a wide range matrix norms (instead
of the Frobenius norm) and proved hardness for essentially all of them.

\paragraph{Spectral Graph Visualization.}
Since $\WSIM$ and $\MSIM$ are essentially linear algebraic problems, 
it is reasonable to hope that the spectral structure 
of the input (adjacency) matrices is closely related with 
the computational complexity of these problems.  
In this regard, we remark that 
spectral graph drawing is a well-established technique for 
visualizing graphs via their spectral properties. 
Formally, let $G$ be a $n$-vertex graph: 
a graph drawing is a map $\rho:V(G) \mapsto \reals^k$, 
where the ambient space has dimension $k \ll n$. 
For spectral graph drawings, this map is typically defined as follows. 
We select a suitable matrix representation 
of the graph and select up to $k$ eigenvectors $u_1,\dots,u_k$ of this matrix. 
Then, the mapping $\rho:V(G) \mapsto \reals^k$ is defined by the 
rows $\{r_1,\dots ,r_n\}$ of the $n \times k$ matrix $[u_1 \cdots u_k]$. 
The choice of the matrix representation and the selection of eigenvectors 
usually depends on the problem at hand. The most useful matrix representation in 
the spectral drawing framework is the well-known Laplacian matrix:
the eigenvectors $u_1,\dots,u_k$ corresponding to $k$ smallest eigenvalues 
define the drawing $\rho$ of interest.

Observe that the graph drawing $\rho$ defined above is not injective in general.
Given such a drawing $\rho$, we define the \emph{clustering number} of a graph $G$ 
to be the cardinality of the set $\mathrm{Image}(\rho)$. 
The elements of $\mathrm{Image}(\rho)$ correspond to subsets of $V(G)$:
every vertex in such a `cluster' has identical adjacency. 

\paragraph{Our results.}
So where does all this leave us? 
Well, $\SIM$ is obviously an extremely hard optimization problem. 
We start our investigations by adding to the body of known hardness results: we prove that $\SIM$ 
remains NP-hard even if both input graphs are trees (Theorem~\ref{thm:hard:trees}). 
Note that in strong contrast to this, the subgraph isomorphism problem becomes easy if both input 
graphs are trees \cite{mat78}. 
The reduction from Hamiltonian path sketched above shows that $\SIM$ is also hard if one input graph is a path. 
We prove that $\SIM$ is tractable in the very restricted case that one of the input graphs is a path and the 
other one is a tree (Theorem~\ref{thm:path-tree}).

As $\WSIM$ and $\MSIM$ are essentially linear algebraic problems, 
it makes sense to look for algebraic tractability criteria. 
We explore bounded rank (of the adjacency matrices) as a tractability
criteria for $\WSIM$ and $\MSIM$. Indeed, the \NP-hardness reductions for $\SIM$ 
involve graphs which have adjacency matrices of high rank (e.g.~paths, cycles). 
We show that the problem $\SIM$ (and $\WSIM$) remains \NP-hard 
as long as one of the matrices has unbounded rank or negative eigenvalues.
(Theorems~\ref{thm:hard:matrices}, \ref{thm:hard:psd} and \ref{thm:hard:psd:bnd}).
Consequently, the realm of tractability for $\WSIM$ (and $\MSIM$) is 
restricted to the class of positive semi-definite matrices of bounded rank. 
We feel that for a problem as hard as $\QAP$ or $\MSIM$, identifying any somewhat natural 
tractable special case is worthwhile. Our main result (Theorem~\ref{thm:main}) is a polynomial time algorithm for $\MSIM$ if both input matrices 
are positive semidefinite (as it is the case for the Laplacian version of $\WSIM$) and have bounded-rank, 
and where one of the matrices has a bounded clustering number. 

For the proof of Theorem~\ref{thm:main}, we can re-write the (squared) objective function as $\|AP-PB\|^2_F$, where $P$ ranges 
over all permutation matrices. This is a convex function, and it would be feasible to minimize it over a convex domain. 
The real difficulty of the problem lies in the fact that we are optimizing over the complicated discrete 
space of permutation matrices. Our approach relies on a linearization of the solution space, and the key insight (Lemma~\ref{lem:lin-sep}) 
is that the optimal solution is essentially determined by polynomially many hyperplanes. 
To prove this, we exploit the convexity of the objective function in a peculiar way.

\section{Preliminaries}\label{sec:prelims}

\subsection{Notation}

We denote the set $\{1,\dots,n\}$ by $[n]$. 
Unless specified otherwise, we will always assume
that the vertex set of an $n$-vertex graph $G$ is $[n]$.
We denote the degree of a vertex $v$ by $d_G(v)$. 

\paragraph{Matrices.} 
Given an $m \times n$ matrix $M$, the $i^{th}$ row (column) of $M$ is denoted by $M^i$ ($M_i$). 
The multiset $\{M^1,\dots,M^m\}$ is denoted by $\rows(M)$. 
Given $S\subseteq [m]$, the sum $\sum_{i\in S} M^i$ is denoted by $M^S$.
We denote the $n \times n$ identity matrix by $I_n$.

A real symmetric $n \times n$ matrix $M$ is called \textit{positive semi-definite} (p.s.d), 
denoted by $M \succeq 0$, if the scalar $z^TMz$ is non-negative for every $z \in \reals^n$. 
The following conditions are well-known to be equivalent.
\begin{compactenum}
\item $M \succeq 0$
\item Every eigenvalue of $M$ is non-negative.
\item $M= W^TW$ for some $n \times n$ matrix $W$. In other words, there 
exist $n$ vectors $w_1,\dots,w_n \in \reals^n$ such that $M_{ij}= w_i^Tw_j$. 
\end{compactenum}

Given two vectors $x,y \in \reals^n$, their dot product $\innerpd{x}{}{y}$ is defined to be $x^Ty$. 
Given $M \succeq 0$, the inner product of $x,y$ w.r.t. M, denoted by $\innerpd{x}{M}{y}$, is defined to be $x^TMy$.
The usual dot product corresponds to the case $M =I$, the identity matrix. 

Every $n \times n$ symmetric matrix $M$ of rank $k$ has a spectral decomposition $M = U \Sigma U^T$.
Here, $\Sigma$ is a $k \times k$ diagonal matrix with the eigenvalues $\lambda_1,\dots,\lambda_k \in \reals$ on the diagonal. 
The matrix $U$ is a $n \times k$ matrix with the corresponding eigenvectors $v_1,\dots,v_k$ as the columns $U_1,\dots,U_k$.

\paragraph{Graphs and Matrices.}
The Laplacian matrix of a (weighted) undirected graph $G$, denoted by $L_G$, is defined as follows. 
Let $A \in \reals^{n \times n}$ be the symmetric (weighted) adjacency matrix of $G$.
Let $D$ be a $n \times n$ diagonal matrix, such that
$D_{ii}$ is the sum of weights of the edges 
incident on the $i^{th}$ vertex. For simple undirectred graphs, $D_{ii}=d_G(v_i)$.
Define the Laplacian of $G$ as $L(G) = D - A$. 
This definition allows us to express the quadratic form 
\[
 x^T L_G x = \displaystyle\sum_{\{i,j\} \in E(G)} a_{ij} (x_i-x_j)^2. 
\]
The above expression immediately implies that $L_G$ is positive semi-definite. 

\paragraph{Clustering Number.}
Recall the following definitions from Section \ref{intro}.
Given a $n$-vertex graph $G$, a \emph{graph drawing} is a 
map $\rho:V(G) \mapsto \reals^k$, where the ambient dimension $k \ll n$. 
We will use the adjacency matrix $A$ of a graph $G$ 
to generate spectral graph drawings as follows. Let the rank of $A$ be $k$, 
and let $\specA$ be a spectral decomposition. Denote $U = [u_1\cdots u_k]$,
where $u_1,\dots,u_k$ are the eigenvectors of $A$. 
The mapping of our interest $\rho:V(G) \mapsto \reals^k$
is defined by the rows $\{r_1,\dots ,r_n\}$ of the $n \times k$ matrix $U$.
Given any two spectral decompositions $\specA$ and $\specprA$, 
it holds that $U' = U O_k$ for some $k \times k$ orthogonal matrix $O_k$. 
Since $O_k$ is invertible, the number of distinct tuples in the set $\rows(U)$ 
is equal to the corresponding number for the set $\rows(U')$.
This allows us to define the \emph{clustering number of a graph $G$}:
it is equal to the cardinality of the set $\mathrm{Image}(\rho)$,
where $\rho$ is defined via some spectral decomposition of $A$, as above.
The above definitions generalize to weighted (undirected) graphs in an analogous manner. 

\paragraph{Frobenius Norm.}
The trace of a matrix $M$, denoted by $\trace{M}$, is defined to be $\sum_{i \in [n]} M_{ii}$. 
The \emph{trace inner product} of two matrices $A$ and $B$, denoted by $\tracepd{A}{B}$, is the scalar $\trace{A^TB}$.  
The \emph{Frobenius norm} $\frob{M}$ of a matrix $M$ is defined in the
introduction. It is easy to check that $\frob{M}^2 = \tracepd{M}{M}$. 

Given two $n$-vertex graphs $G$ and $H$ and a permutation $\pi \in S_n$, 
a $\pi$-\emph{mismatch} between $G$ and $H$ is a pair $\{i,j\}$ such that $\{i,j\} \in E(G)$ and $\{i^{\pi},j^{\pi}\} \notin E(H)$ (or vice-versa). 
In other words, $\pi:V(G) \rightarrow V(H)$ does not preserve adjacency for the pair $\{i,j\}$. The following claim will be useful 
as a combinatorial interpretation of the Frobenius norm. 
Let $\Delta$ denote the number of $\pi$-mismatches between $G$ and $H$. 
\begin{claim}
 $\|A_G^{\pi} - A_H\|_F^2 = 2 \Delta$.
\end{claim}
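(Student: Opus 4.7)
The plan is to unfold the definition of the Frobenius norm entrywise and match up nonzero entries of $A_G^\pi - A_H$ with $\pi$-mismatches. Since $G$ and $H$ are simple graphs, both $A_G^\pi$ and $A_H$ are symmetric $0/1$ matrices with zero diagonal, so every entry of their difference lies in $\{-1,0,1\}$, and each nonzero entry contributes exactly $1$ to $\frob{A_G^\pi - A_H}^2$.

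First, I would observe that by definition $(A_G^\pi)_{ij} = (A_H)_{ij}$ iff the adjacency status of the pair $\{i,j\}$ is the same in the permuted graph $G^\pi$ and in $H$, and otherwise $\{i,j\}$ is a $\pi$-mismatch (using that $\{i,j\}\in E(G)$ iff $\{i^\pi,j^\pi\}\in E(G^\pi)$). Therefore the set of ordered pairs $(i,j)$ with $(A_G^\pi - A_H)_{ij} \neq 0$ is exactly the set of ordered pairs whose underlying unordered pair is a $\pi$-mismatch.

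Next, I would use symmetry: since both matrices are symmetric, an unordered mismatch $\{i,j\}$ with $i \neq j$ yields two nonzero entries of $A_G^\pi - A_H$, namely at positions $(i,j)$ and $(j,i)$. The diagonal contributes nothing since $(A_G^\pi)_{ii}=(A_H)_{ii}=0$. Writing
\[
\frob{A_G^\pi - A_H}^2 = \sum_{i,j}\bigl((A_G^\pi)_{ij}-(A_H)_{ij}\bigr)^2
\]
and noting each nonzero summand equals $1$, the total equals the number of ordered mismatch pairs, which is $2\Delta$.

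There is no real obstacle here; the only thing to be careful about is the bookkeeping between ordered and unordered pairs, and the observation that the matrices have $\{0,1\}$ entries (so squaring differences is the same as counting nonzero positions). For weighted graphs one would need a different formulation, but the claim as stated concerns the unweighted Frobenius norm.
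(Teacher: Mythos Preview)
Your proof is correct and follows essentially the same approach as the paper's: both argue that the nonzero entries of $A_G^\pi - A_H$ are exactly the positions corresponding to $\pi$-mismatches, each contributing $1$, and that by symmetry each unordered mismatch $\{i,j\}$ is counted twice (at $(i,j)$ and $(j,i)$). Your version is simply more explicit about the $0/1$ entries and the ordered/unordered bookkeeping, which the paper leaves implicit.
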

\begin{proof}
The only non-zero terms in the expansion of summation $\|A_G^{\pi} - A_H\|_F^2$ correspond to $\pi$-mismatches. 
Since every mismatch $\{i,j\}$ contributes $1$ and is counted twice in the summation, the claim follows.
\end{proof}

\subsection{Convex Optimization}\label{subsec:convex}
A \emph{hyperplane} $H$ in the Euclidean space $\reals^k$ is a $(k-1)$-dimensional affine subspace.
The usual representation of a hyperplane is a linear equation $\innerpd{c}{}{x}=\alpha$ for some $c\in\reals^k, \alpha \in \reals$. 
The convex sets $\{x \,\vert\, \innerpd{c}{}{x} > \alpha\}$ and $\{x \,\vert\, \innerpd{c}{}{x} < \alpha\}$ are called the open \emph{half-spaces} corresponding to $H$, 
denoted by $H^+,H^-$ respectively. 

Two sets $(S,T)$ are \emph{weakly linearly separated} if there exists a hyperplane $H$ 
such that $S \subseteq H^+ \cup H$ and $T\subseteq H^- \cup H$. In this case, we call them to be 
weakly linearly separated along $H$. A family of sets $S_1,\dots,S_p$ is \emph{weakly linearly separated} 
if for every $l,m \in [p]$, the sets $S_l,S_m$ are weakly linearly separated. 
Let $\Pi$ be a partition of a set $S$ into $p$ sets $S_1,\dots,S_p$. The
partition $\Pi$ is said to be \emph{mutually linearly separated} 
if the family of sets $S_1,\dots,S_p$ is weakly linearly separated. 

%
%

 A subset $S \subseteq \reals^k$ is called \emph{convex} if for every $x,y \in S$, $\alpha x + (1-\alpha)y \in S$, $\alpha \in [0,1]$. A function $f:\reals^k \rightarrow \reals$ is called \emph{convex} on a convex set $S$ if for every $x,y \in S$, $f(\alpha x + (1-\alpha)y) \leq \alpha f(x) + (1-\alpha) f(y)$. 
 The following theorem about linearization of convex differentiable
 functions is well-known and is stated without proof. 
 The gradient of a function $f:\reals^k \rightarrow \reals$, denoted by $\nabla f$, is the vector-valued function $[ \frac{\partial f }{\partial x_1 } \dots \frac{\partial f }{\partial x_k }]$.
 Given $X^* \in \reals^k$, let $\mu^*$ denote the vector $\nabla f (X^*)$.

\begin{theorem}[Convex function linearization]\label{thm:convlin}
Let $f: \reals^k \rightarrow \reals$ be a convex function. 
For all $X\in \reals^k$, $f(X) - f(X^*) \geq \innerpd{\mu^*}{}{X-X^*}$.
\end{theorem}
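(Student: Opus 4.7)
The plan is to derive the linear lower bound directly from the definition of convexity combined with the definition of the gradient as a directional derivative. Fix any $X\in\reals^k$ and set $Y := X - X^*$; the goal is to show $f(X^*+Y)-f(X^*) \geq \innerpd{\mu^*}{}{Y}$.

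First, I would invoke convexity at the points $X^*+Y$ and $X^*$: for every $\alpha \in (0,1)$,
\[
f\bigl(X^* + \alpha Y\bigr) \;=\; f\bigl(\alpha(X^*+Y) + (1-\alpha)X^*\bigr) \;\leq\; \alpha\, f(X^*+Y) + (1-\alpha)\, f(X^*).
\]
Subtracting $f(X^*)$ from both sides and dividing by $\alpha>0$ yields
\[
\frac{f(X^* + \alpha Y) - f(X^*)}{\alpha} \;\leq\; f(X^*+Y) - f(X^*).
\]

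Second, I would pass to the limit $\alpha \to 0^+$. Since $f$ is (implicitly assumed) differentiable at $X^*$, the left-hand side converges to the directional derivative of $f$ at $X^*$ in direction $Y$, which by the chain rule equals $\nabla f(X^*)^{\!\top} Y = \innerpd{\mu^*}{}{Y}$. Combining this with the previous inequality gives
\[
\innerpd{\mu^*}{}{X - X^*} \;\leq\; f(X) - f(X^*),
\]
as desired.

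There is no real obstacle here: the only subtle point is the justification that the difference quotient in fact has a limit. This is immediate from differentiability at $X^*$, and in fact one can strengthen the argument by noting that for convex $f$ the difference quotient $\alpha \mapsto \bigl(f(X^*+\alpha Y) - f(X^*)\bigr)/\alpha$ is monotonically nondecreasing in $\alpha>0$, so the limit as $\alpha \to 0^+$ exists (possibly $-\infty$) and coincides with the directional derivative whenever the gradient exists. The rest is purely routine, which is why the theorem is stated without proof in the paper.
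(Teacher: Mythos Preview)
Your argument is the standard proof of the first-order characterization of convexity and is correct; the only assumption you rely on beyond the statement is differentiability of $f$ at $X^*$, which the paper explicitly makes (``linearization of convex differentiable functions'') and which is needed for $\mu^*=\nabla f(X^*)$ to even be defined. The paper itself gives no proof of this theorem at all---it is declared ``well-known and is stated without proof''---so there is nothing to compare against beyond noting that your derivation is exactly the textbook one.
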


Next, we show that the linearization of a convex function can be useful
in understanding its optima over a finite domain. We prove the following lemma about 
convex functions, which is interesting in its own right. 

\begin{lemma}
  Let $\Omega$ be a finite subset of $\reals^{k} \times \reals^ {\ell}$. Let $G:\reals^{k} \rightarrow \reals$, 
$H:\reals^{\ell}\rightarrow \reals$ such that $H$ is convex, and let $F:\reals^{k} \times
\reals^ {\ell} \rightarrow \reals$ be defined as $F(X,Y) = G(X) +
H(Y)$. Let $(X^*,Y^*) \in \argmax_{(X,Y)\in \Omega} F(X,Y)$.

Then there exist a $\mu^* \in \reals^{\ell}$ such that:
\begin{enumerate}
\item[(i)] $(X^*,Y^*) \in \argmax_{(X,Y)\in \Omega} L(X,Y)$ where
  $L(X,Y) = G(X) + \innerpd{\mu^*}{}{Y}$;
\item[(ii)] $\argmax_{(X,Y) \in \Omega} L(X,Y)\subseteq\argmax_{(X,Y) \in \Omega} F(X,Y)$. 
\end{enumerate}
\end{lemma}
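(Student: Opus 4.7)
The natural candidate for $\mu^*$ is $\nabla H(Y^*)$ (or, if $H$ is not differentiable at $Y^*$, any subgradient there, which exists since $H$ is convex on all of $\reals^\ell$). With this choice, Theorem~\ref{thm:convlin} applied to $H$ yields, for every $Y \in \reals^\ell$,
\[
H(Y) - H(Y^*) \;\ge\; \innerpd{\mu^*}{}{Y - Y^*}.
\]
Adding $G(X)$ to both sides and rearranging gives the equivalent global inequality
\[
F(X,Y) - L(X,Y) \;\ge\; F(X^*,Y^*) - L(X^*,Y^*) \qquad \text{for all } (X,Y) \in \reals^k \times \reals^\ell, \quad (\ast)
\]
which is the engine of the entire argument: because the convex $H$ lies above its tangent hyperplane at $Y^*$, the ``gap'' $F - L$ attains its minimum at every point of the form $(X, Y^*)$.

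For part (i), suppose towards a contradiction that some $(X',Y') \in \Omega$ satisfies $L(X',Y') > L(X^*,Y^*)$. Adding this strict inequality to $(\ast)$ evaluated at $(X',Y')$ yields $F(X',Y') > F(X^*,Y^*)$, contradicting the optimality of $(X^*,Y^*)$ for $F$ over $\Omega$. Hence $(X^*,Y^*)$ is itself a maximizer of $L$ over $\Omega$.

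For part (ii), let $(X',Y') \in \argmax_{(X,Y) \in \Omega} L(X,Y)$. By (i) we have $L(X',Y') = L(X^*,Y^*)$, and $(\ast)$ at $(X',Y')$ then gives $F(X',Y') \ge F(X^*,Y^*)$. Optimality of $(X^*,Y^*)$ for $F$ forces equality, so $(X',Y') \in \argmax_{(X,Y) \in \Omega} F(X,Y)$, as required.

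The only real conceptual step is spotting the correct direction of inequality: convexity is usually leveraged via tangent \emph{lower} bounds on the convex function, which in a maximization problem at first glance seem to point the wrong way. The insight is that this very lower bound on $H$ makes $F - L$ minimal at $Y^*$, which is exactly what is needed to transfer $L$-optimality back into $F$-optimality. Once this is recognised, the rest is one rearrangement plus two essentially identical applications of $(\ast)$, so no further technical obstacles arise.
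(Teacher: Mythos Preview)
Your proof is correct and is essentially the same as the paper's: both choose $\mu^*=\nabla H(Y^*)$ via Theorem~\ref{thm:convlin}, and your inequality~$(\ast)$ is exactly the paper's key inequality $H(Y)-\innerpd{\mu^*}{}{Y}\ge H(Y^*)-\innerpd{\mu^*}{}{Y^*}$ rewritten in terms of $F-L$. The only cosmetic difference is that the paper proves~(i) directly rather than by contradiction; your packaging of the argument around the single inequality~$(\ast)$ is arguably a bit cleaner, but the underlying mathematics is identical.
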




In other words, for every $(X^*,Y^*)$ which maximizes $F$ over $\Omega$, there exists a partially ``linearized'' function $L$ such that $(X^*,Y^*)$ maximizes $L$ over $\Omega$. Moreover, every maximizer of $L$ over $\Omega$ is a maximizer of $F$ over $\Omega$. This additional condition is necessary so that this ``linearization'' does not create spurious optimal solutions. 

\begin{proof} 
  Let $(X^*,Y^*) \in \argmax_{S \in \Omega} F(S)$.  Since $H$ is
  convex, we can use Theorem \ref{thm:convlin} to linearize $H$ around
  $Y^* \in \reals^{\ell}$. Hence, there exists a
  $\mu^* \in \reals^{\ell}$ such that
  $H(Y) - H(Y^*) \geq \innerpd{\mu^*}{}{Y -Y^*}$, or equivalently,
  \begin{equation}
    \label{eq:2}
    H(Y)-\innerpd{\mu^*}{}{Y}\geq H(Y^*)-\innerpd{\mu^*}{}{Y^*},
  \end{equation}
  for all
  $Y \in \reals^{\ell}$.
  Hence with $L(X,Y)=G(X)+\innerpd{\mu^*}{}{Y}$, for all $(X,Y)\in\Omega$ we have
  \[
    L(X^*,Y^*)=F(X^*,Y^*)-H(Y^*)+\innerpd{\mu^*}{}{Y^*}\geq
    F(X,Y)-H(Y)+\innerpd{\mu^*}{}{Y}=L(X,Y),
  \]
where the inequality holds by \eqref{eq:2} and because $(X^*,Y^*)$
maximizes $F$. Hence $(X^*,Y^*)$ maximizes $L$ as well, which proves (i).

For (ii), consider $(X^{**},Y^{**})\in \argmax_{(X,Y)\in \Omega} L(X,Y)$. 
To prove that $(X^{**},Y^{**})\in \argmax_{(X,Y)\in \Omega} F(X,Y)$, it 
suffices to prove that $F(X^{**},Y^{**})\ge F(X^*,Y^*)$. 
By (i), we have $L(X^*,Y^*)=L(X^{**},Y^{**})$. Thus 
\[
F(X^{**},Y^{**})=L(X^{**},Y^{**})+H(Y^{**})-\innerpd{\mu^*}{}{Y^{**}}\ge
L(X^{*},Y^{*})+H(Y^{*})-\innerpd{\mu^*}{}{Y^{*}}=F(X^*,Y^*),
\]
where the inequality holds by \eqref{eq:2} with $(X,Y):=(X^{**},Y^{**})$ and as
$(X^{**},Y^{**})$ maximizes $L$.
\end{proof}

 \begin{corollary}\label{coro:conv-lin}
   Let $\Omega$ be a finite subset of $\reals^{kp}$. For all
   $i\in[k]$, let $G_i:\reals^k\to \reals$ be a convex function, and
   let $F: \reals^{kp}\to\reals$ be defined ny
   $F(X_1,\ldots,X_k):=G_1(X_1)+\ldots+G_k(X_k)$. 
   Let $X^*=(X_1^*,\ldots,X_k^*)\in\argmax_{X\in\Omega} F(X)$.

Then there are
   $\mu_1^*,\ldots,\mu_k^*\in\RR^p$ such that:
\begin{enumerate}
\item[(i)] $X^* \in \argmax_{X\in \Omega} L(X)$ where
  $L(X_1,\ldots,X_k) =\sum_{i=1}^k\innerpd{\mu_i^*}{}{X_i}$;
\item[(ii)] $\argmax_{X \in \Omega} L(X)\subseteq\argmax_{ \in \Omega} F(X)$. 
\end{enumerate}
 \end{corollary}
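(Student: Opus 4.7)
The plan is to derive the corollary by applying the preceding lemma $k$ times in sequence, peeling off and linearizing one $G_i$ at each step.

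Concretely, I would proceed by induction on $k$, or equivalently by a chain of $k$ invocations of the lemma. At the first invocation, let the lemma's $Y$ play the role of $X_k$ and its $X$ the role of $(X_1,\ldots,X_{k-1})$; take the convex function $H$ to be $G_k$, and set $G := G_1 + \ldots + G_{k-1}$ (for which no convexity is required). The lemma produces $\mu_k^* \in \reals^p$ and an intermediate objective
\[
L^{(k)}(X_1,\ldots,X_k) = G_1(X_1) + \ldots + G_{k-1}(X_{k-1}) + \innerpd{\mu_k^*}{}{X_k}
\]
such that $X^* \in \argmax_{X\in\Omega} L^{(k)}(X)$ and $\argmax_{X\in\Omega} L^{(k)}(X) \subseteq \argmax_{X\in\Omega} F(X)$. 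I then repeat the construction, treating $L^{(k)}$ as the new $F$ and linearizing $G_{k-1}$: the already-linearized term $\innerpd{\mu_k^*}{}{X_k}$ is absorbed into the $G$-part of the next invocation, which is legitimate because the lemma imposes no convexity assumption on $G$. Iterating for $j = k, k-1, \ldots, 1$ yields vectors $\mu_1^*,\ldots,\mu_k^* \in \reals^p$ and the fully linearized objective $L(X_1,\ldots,X_k) = \sum_{i=1}^k \innerpd{\mu_i^*}{}{X_i}$.

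Part (i) then follows because each invocation of the lemma preserves the original $X^*$ as a maximizer of the current intermediate objective (by part (i) of the lemma), so by induction $X^*$ still maximizes $L$. Part (ii) follows by composing the set containments from part (ii) of the lemma at each step, giving a chain
\[
\argmax_{X\in\Omega} L(X) \;\subseteq\; \argmax_{X\in\Omega} L^{(2)}(X) \;\subseteq\; \cdots \;\subseteq\; \argmax_{X\in\Omega} L^{(k)}(X) \;\subseteq\; \argmax_{X\in\Omega} F(X).
\]
The only hypothesis to verify at each step is that the function being peeled off is convex, which holds automatically since every $G_j$ is convex by assumption; the accumulating linear terms only ever enter on the $G$-side, where convexity is irrelevant. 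I do not expect a substantive obstacle — the corollary is essentially an iterated use of the lemma, and the work is little more than careful bookkeeping across the $k$ successive invocations.
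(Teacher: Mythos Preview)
Your proposal is correct and matches the paper's own proof essentially verbatim: the paper also derives the corollary by inductively applying the lemma $k$ times, absorbing the already-linearized terms into the $G$-part at each step. The only cosmetic difference is the order of linearization (you peel off $G_k,G_{k-1},\ldots,G_1$, whereas the paper peels off $G_1,G_2,\ldots,G_k$), which is immaterial.
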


 \begin{proof}
   Inductively apply the lemma to the functions 
   \[
     F^i((X_1,\ldots,X_{i-1},X_{i+1},\ldots,X_k),X_i)=
     \underbrace{\left(\sum_{j=1}^{i-1}\innerpd{\mu_j^*}{}{X_j}+\sum_{j=i+1}^{k}G_j(X_j)\right)}_{=:G^i(X_1,\ldots,X_{i-1},X_{i+1},\ldots,X_k)}+
     \underbrace{G_i(X_i)}_{=:H^i(X_i)}.
   \]
 \end{proof}

Finally, we state an important fact about the convexity of quadratic functions.  
Given a p.s.d. matrix $M \in \realmat{k}$, the quadratic function $Q_M:\reals^k \rightarrow \reals$ is defined as $Q_M(x) = \innerpd{x}{M}{x}$.
\begin{lemma}[Convexity of p.s.d]\label{lem:psdconv}
  $Q_M$ is convex on $\reals^k$. 
\end{lemma}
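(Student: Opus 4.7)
My plan is to argue directly from the definition of convexity by expanding $Q_M(\alpha x + (1-\alpha)y)$ and comparing it to $\alpha Q_M(x) + (1-\alpha)Q_M(y)$. Since $M$ is symmetric (being p.s.d.), we have $x^T M y = y^T M x$, so the bilinear cross terms collapse cleanly. Once everything is simplified, the difference will factor as $\alpha(1-\alpha)(x-y)^T M (x-y)$, and this is nonnegative because $\alpha(1-\alpha) \geq 0$ on $[0,1]$ and $z^T M z \geq 0$ for all $z$ by the definition of $M \succeq 0$.

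Concretely, the first step is to fix $x,y \in \reals^k$ and $\alpha \in [0,1]$, write $z = \alpha x + (1-\alpha)y$, and expand
\[
Q_M(z) = \alpha^2\, x^T M x + 2\alpha(1-\alpha)\, x^T M y + (1-\alpha)^2\, y^T M y,
\]
using symmetry of $M$ to merge $x^T M y$ and $y^T M x$. The second step is to subtract this from $\alpha x^T M x + (1-\alpha) y^T M y$; using $\alpha - \alpha^2 = \alpha(1-\alpha) = (1-\alpha) - (1-\alpha)^2$, the difference becomes
\[
\alpha(1-\alpha)\bigl(x^T M x - 2 x^T M y + y^T M y\bigr) = \alpha(1-\alpha)\,(x-y)^T M (x-y).
\]
The third step is to invoke the equivalent characterization (2) of $M \succeq 0$ listed in the preliminaries to conclude that $(x-y)^T M (x-y) \geq 0$, so the whole expression is nonnegative, giving exactly $Q_M(z) \leq \alpha Q_M(x) + (1-\alpha) Q_M(y)$.

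There is no real obstacle here: the result is standard and the argument is a short algebraic manipulation relying only on symmetry of $M$ and the defining property $z^T M z \geq 0$. An alternative would be to use the factorization $M = W^T W$ from condition (3) and observe that $Q_M(x) = \|Wx\|_2^2$, reducing convexity of $Q_M$ to convexity of the squared Euclidean norm composed with a linear map, but the direct computation above is the cleanest route and does not require any auxiliary fact about the norm.
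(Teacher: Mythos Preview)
Your proof is correct and follows essentially the same route as the paper: expand $Q_M(\alpha x+(1-\alpha)y)$ bilinearly, then use $(x-y)^TM(x-y)\ge 0$ to bound the cross term and obtain the convexity inequality. One tiny quibble: the fact $(x-y)^TM(x-y)\ge 0$ is the \emph{definition} of $M\succeq 0$ in the preliminaries, not item~(2) (the eigenvalue condition), so just cite the definition directly.
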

\begin{proof}
For all $\alpha \in [0,1]$, $Q_M(\alpha x + (1-\alpha) y ) = 
\innerpd{(\alpha x + (1-\alpha y)}{M}{(\alpha x + (1-\alpha y)} = \alpha^2 \innerpd{x}{M}{x} + (1-\alpha)^2 \innerpd{y}{M}{y} + 2 \alpha(1-\alpha) \innerpd{x}{M}{y}$.
Using $\innerpd{x-y}{M}{x-y} \geq 0$, we can show that 
$\innerpd{x}{M}{x} + \innerpd{y}{M}{y} \geq 2 \innerpd{x}{M}{y}$. Combining, we have $Q_M(\alpha x + (1-\alpha) y ) \leq \alpha^2 Q_M(x) + (1-\alpha)^2 Q_M(y) + \alpha(1-\alpha) (Q_M(x) + Q_M (y)) \leq \alpha Q_M(x) + (1-\alpha) Q_M(y)$. 
Hence, $Q_M$ is convex. 
\end{proof}

\subsection{Simulation of Simplicity} 
\label{subsec:sos}
In this section, we describe an elegant technique for handling 
degeneracy in the input data for geometrical algorithms that is 
due to Edelsbrunner and M\"ucke \cite{edel90}. 
We also state an important lemma which will be directly useful 
for our algorithmic results in Section \ref{sec:algo}. 

An input set $S$ of $n$ points $w_1,\dots,w_n \in \reals^k$ is
said to be in \emph{general position}, if there is no subset
$S' \subseteq S$ with $|S'|>k$ that lies on a common hyperplane.
If we are optimizing a certain function of this input on a discrete 
space $\Omega$, infinitesimally small perturbations of 
$w_1,\dots,w_n$ will not change the set $\Omega^* \subseteq \Omega$ 
of optimal solutions.
Hence we may always assume (modulo infinitesimal perturbations)
that such input sets are in general position and do not contain
degenerate subsets.
{From} the algorithmic point of view, the caveat is that these 
perturbations might be so small that we cannot even represent 
them efficiently. 

In this context, Edelsbrunner and M\"ucke \cite{edel90} developed 
a useful technique to handle degeneracy in input data, called 
\emph{Simulation-of-Simplicity}. 
The idea is to introduce conceptual perturbations which eliminate 
all degeneracies: the perturbations are never computed explicitly
in practice. 
In fact, the perturbations are just certain conveniently chosen 
polynomials in a parameter $\epsilon$, so that after adding
these polynomials to the coordinates the perturbed set agrees with 
the input set for $\epsilon=0$. 
For our purposes, we require such a perturbation of an input set $S$ 
of $n$ points $w_1,\dots,w_n \in \reals^k$ that brings them into 
general position. 
We select $nk$ perturbations $\epsilon_{ij}$ for $i\in[k]$ and $j \in [n]$ as follows. 
We perturb the $i^{th}$ coordinate of vector $w_j$ by adding $\epsilon_{ij}$. 
In our algorithmic application, we need to consistently answer 
queries of the type: \emph{``Given points $w_{i_1},\dots,w_{i_k}$ (with 
$i_1 < \dots <i_k$) and a point $w_{i_{k+1}}$, does the point $w_{i_{k+1}}$ 
lie below, on, or above the hyperplane determined by $w_{i_1},\dots,w_{i_k}$?''}
We can implement and answer such queries in $\mathcal{O}(k^k)$ time as 
follows. 
The answer to the query depends on the sign of the determinant of the following 
$(k+1) \times (k+1)$ matrix $\tilde{M}$, which is also the \emph{signed} 
volume of the parallelopiped defined by the vectors $w_{i_1}-w_{i_{k+1}},\dots,w_{i_k}-w_{i_{k+1}}$.
\begin{align*}
\left[
\begin{array}{c c c c c }
  w_{i_11} + \epsilon_{i_1 1} & w_{i_21}+\epsilon_{i_2 1} & \dots & w_{i_k 1}+\epsilon_{i_k 1} & w_{i_{k+1}1}+\epsilon_{i_{k+1} 1} \\
  \vdots & \dots & \vdots & \vdots & \vdots \\
  w_{i_1k} + \epsilon_{i_1 k} & w_{i_2k}+\epsilon_{i_2 k} & \dots & w_{i_k k}+\epsilon_{i_k k} & w_{i_{k+1} k}+\epsilon_{i_{k+1} k} \\
 1 & 1 & \dots & 1 & 1 
\end{array}
\right].
\end{align*}  
The determinant of matrix $\tilde{M}$ is a polynomial in the $\epsilon_{ij}$,
which can be computed in $\bigO((k+1)!)= \mathcal{O}(k^k)$ time by using the
Leibniz expansion
\[
\operatorname{det}(\tilde{M}) = 
  \displaystyle\sum_{\sigma \in S_{k+1}}(\operatorname{sgn}(\sigma)\displaystyle\prod_{i=1}^{k+1} \tilde{M}_{i,i^{\sigma}}).
\]
It is easy to see that this polynomial is not identically zero, as every 
term in the Leibniz expansion yields a different polynomial. 
This property ensures the non-degeneracy in our conceptual perturbations. 
We impose a lexicographic ordering on $\epsilon_{ij}$ as follows: 
$\epsilon_{11} < \dots <\epsilon_{1n}< \epsilon_{21}<\dots<\epsilon_{2n}<\dots<\epsilon_{kn}$. 
This induces a natural lexicographic ordering on the monomials in the polynomial $\operatorname{det}(\tilde{M})$.
The lexicographically least monomial in this ordering has either a positive or a negative coefficient:
we interpret the sign of this coefficient as the relative position of $w_{i_{k+1}}$ with respect 
to the hyperplane determined by $w_{i_1},\dots,w_{i_k}$. We refer the reader to \cite{edel90} for 
further details. We summarize the above discussion in the following lemma. 

\begin{lemma}\label{lem:sos}
Given a set $W=\{w_1,\dots,w_n\}$ of $n$ points in $\reals^k$, 
\begin{itemize}
\item The lexicographic ordering of the $\epsilon_{ij}$ yields a canonical perturbation of the 
points $w_1,\dots,w_n$ such that the resulting set is in general position. 
\item There exists an $\mathcal{O}(k^k)$ time subroutine which computes the relative position of 
a canonically perturbed point with respect to the hyperplane determined by $k$ canonically 
perturbed points.
\end{itemize}
\end{lemma}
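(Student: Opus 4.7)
The plan is to formalize and verify the two claims of the lemma using the Leibniz expansion of $\det(\tilde{M})$ together with the lexicographic ordering on the formal parameters $\epsilon_{ij}$. Throughout, I view each perturbed coordinate as a polynomial in the $kn$ variables $\epsilon_{ij}$, so that $\det(\tilde{M})$ becomes a polynomial in these variables, and the geometric question of on which side of the hyperplane through $w_{i_1},\ldots,w_{i_k}$ the point $w_{i_{k+1}}$ lies reduces to determining the sign of this polynomial under infinitesimal perturbation.

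For the first bullet, I would fix pairwise distinct indices $i_1,\ldots,i_{k+1} \in [n]$ and exhibit a monomial of $\det(\tilde{M})$ whose coefficient is non-zero. A natural choice is the monomial $\prod_{r=1}^k \epsilon_{i_r, r}$: in the Leibniz expansion
\[
\det(\tilde{M})=\sum_{\sigma \in S_{k+1}} \operatorname{sgn}(\sigma)\prod_{r=1}^{k+1} \tilde{M}_{r,\sigma(r)},
\]
only a permutation with $\sigma(r) = r$ for every $r \in [k]$ can yield this monomial, since distinctness of the $i_r$ forces the identification of column $r$ with the point $w_{i_r}$; this uniquely selects the identity permutation, contributing coefficient $+1$. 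Hence $\det(\tilde{M})$ is not identically zero as a polynomial in the $\epsilon_{ij}$, its vanishing locus is a proper subvariety of $\reals^{kn}$, and applying this argument across all $(k+1)$-subsets of $[n]$ we obtain that a generic perturbation places the points in general position. The fixed lex order on the $\epsilon_{ij}$ then singles out this perturbation canonically in the Laurent-series sense.

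For the second bullet, I would describe the subroutine concretely: enumerate the $(k+1)!$ terms of the Leibniz expansion, and for each symbolically expand the product $\prod_{r=1}^k (w_{i_{\sigma(r)}, r} + \epsilon_{i_{\sigma(r)}, r})$ into at most $2^k$ monomials in the $\epsilon_{ij}$, with coefficients computed from the $w$'s. Accumulating and merging like terms costs $\bigO((k+1)! \cdot 2^k) = \bigO(k^k)$. Next, traverse the resulting monomials in the prescribed lex order on $\epsilon_{11} < \cdots < \epsilon_{kn}$ and return the sign of the coefficient of the lex-least non-vanishing monomial; the standard argument for the lex-leading term of a polynomial evaluated on infinitesimals of strictly decreasing order shows that this sign agrees with the sign of $\det(\tilde{M})$ at the canonically perturbed point, which is the geometric answer to the query.

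The main subtlety I foresee is the combinatorial verification that the witness monomial has a unique preimage in the Leibniz expansion: any competing permutation either uses the all-ones row of $\tilde{M}$ in a different column, thereby changing which $\epsilon$-factors can appear, or permutes the column-to-point correspondence in a way that, by distinctness of the $i_r$, changes at least one $\epsilon$-variable. A short bookkeeping argument rules out all such alternatives. The Laurent-series justification of the lex-leading-monomial principle is likewise routine but should be stated explicitly, so that the perturbation scheme remains consistent across all queries the algorithm makes and so that no spurious cancellations can bias the reported sign.
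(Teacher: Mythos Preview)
Your proposal is correct and follows essentially the same approach as the paper: the paper's ``proof'' is the discussion preceding the lemma, which likewise computes $\det(\tilde M)$ via the Leibniz expansion, argues non-vanishing because the $(k{+}1)!$ Leibniz terms yield distinct top-degree $\epsilon$-monomials, and outputs the sign of the lex-least monomial. Your version is slightly more explicit---you pin down the concrete witness monomial $\prod_{r=1}^k \epsilon_{i_r,r}$ and account for the extra $2^k$ factor from expanding each Leibniz term (still $O(k^k)$ since $(k{+}1)!\cdot 2^k=o(k^k)$)---but the underlying argument is the same.
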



\section{Hardness Results}
\label{sec:hard}
In this section, we show several new hardness results for problems $\SIM, \WSIM$ and $\MSIM$.
As we will observe, these problems turn out to be algorithmically intractable, even for severely restricted cases. 
We begin by recalling the following observation. 

\begin{theorem}[Folklore]
\label{thm:hard:graphs}
 $\SIM$ is \NP-hard for the class of simple undirected graphs. 
\end{theorem}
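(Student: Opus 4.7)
The plan is to reduce from the Hamiltonian Path problem, which is a classical \NP-complete problem. Given an input graph $G$ on $n$ vertices, the reduction outputs the pair $(G,H,d)$ where $H:=P_n$ is the simple path on $n$ vertices and $d$ is a carefully chosen threshold. Concretely, I will show that $G$ has a Hamiltonian path if and only if $\dist(G,H)^2\le 2(|E(G)|-(n-1))$. Since the construction is trivially polynomial, this establishes \NP-hardness.

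The analysis hinges on the combinatorial interpretation of the Frobenius distance given earlier: for any permutation $\pi$,
\[
\|A_G^{\pi}-A_H\|_F^2 = 2\Delta_\pi,
\]
where $\Delta_\pi$ counts the $\pi$-mismatches between the edge sets of $G$ and $H$. Partition $\Delta_\pi$ into edges of $G$ that are mapped by $\pi$ to non-edges of $H$, plus edges of $H$ that do not arise as images of edges of $G$. Since $|E(H)|=n-1$, at most $n-1$ edges of $G$ can be mapped to edges of $H$, so
\[
\Delta_\pi \;\ge\; |E(G)| - (n-1),
\]
regardless of $\pi$. Equality holds precisely when $\pi$ maps some set of $n-1$ edges of $G$ bijectively onto the edges of $H$, i.e., when $\pi^{-1}$ identifies a copy of $P_n$ inside $G$; this is the very definition of a Hamiltonian path in $G$.

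Conversely, if $v_1v_2\cdots v_n$ is a Hamiltonian path of $G$, take $\pi$ to be the permutation sending $v_i$ to the $i$-th vertex of $H$. Then every edge of $H$ is matched by an edge of $G$, and the only contributing mismatches are the remaining $|E(G)|-(n-1)$ edges of $G$, giving $\Delta_\pi=|E(G)|-(n-1)$ and hence meeting the threshold. Combining both directions proves the equivalence.

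There is no real obstacle here; the only mild subtlety is getting the threshold right and noticing that, since a path has exactly $n-1$ edges, no permutation can do better than $|E(G)|-(n-1)$ mismatches on the $G$-side. This is why the path is such a convenient target graph for the reduction.
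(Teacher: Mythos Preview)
Your proof is correct and follows essentially the same approach as the paper: the introduction sketches exactly this reduction from Hamiltonian Path using $H=P_n$, while the appendix proof gives the minor variant reducing from Hamiltonian Cycle in $3$-regular graphs with $H=C_n$ (the $3$-regularity just makes the mismatch count a fixed constant $n/2$ rather than depending on $|E(G)|$). The underlying idea---count matched edges and observe that equality in the mismatch bound forces a spanning path/cycle---is identical.
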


In fact, the problem turns out to be \NP-hard even for very restricted graph classes. 
The following theorem is the main hardness result of this section.  
\begin{theorem}
\label{thm:hard:trees}
 $\SIM$ is \NP-hard for the class of trees. 
\end{theorem}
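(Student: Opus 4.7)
\smallskip\noindent\textbf{Proof plan.}

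The plan is to reduce from \textsc{3-Partition}, which is strongly \NP-hard. Given an instance consisting of $3m$ positive integers $a_1,\ldots,a_{3m}$ with $B/4<a_i<B/2$ and $\sum_i a_i=mB$, I construct two spider-shaped trees $G$ and $H$, each on $n=1+mB$ vertices. The tree $G$ has a central hub vertex $c$ attached to $3m$ pendant ``item paths'' $P_1,\ldots,P_{3m}$, where $P_i$ consists of $a_i$ edges. The tree $H$ has a central hub vertex $c'$ attached to $m$ pendant ``bin paths'' $Q_1,\ldots,Q_m$, each consisting of $B$ edges. Both trees have exactly $mB$ edges, so any vertex-bijection between them is meaningful. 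The reduction will show that $\dist(G,H)^2\le 8m$ if and only if the \textsc{3-Partition} instance is feasible.

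By the edge-mismatch claim we have $\dist(G,H)^2=4(mB-\mathrm{pres}(\pi))$, where $\mathrm{pres}(\pi)$ counts the edges of $G$ whose $\pi$-image is an edge of $H$, so the threshold $8m$ corresponds to $\mathrm{pres}(\pi)=mB-2m$. For the forward direction, given a 3-partition $(S_1,\ldots,S_m)$ with $|S_j|=3$ and $\sum_{i\in S_j}a_i=B$, I define $\pi$ by mapping $c\mapsto c'$ and embedding the three item paths of $S_j$ as consecutive contiguous segments along $Q_j$. A direct count yields $m$ preserved central edges (one ``leader'' per bin) plus all $mB-3m$ preserved internal path edges, summing to the required $mB-2m$.

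For the backward direction, I need the upper bound $\mathrm{pres}(\pi)\le mB-2m$ for every $\pi$, with equality only when $\pi$ corresponds to a 3-partition. The crux is a case analysis on $\pi(c)$. If $\pi(c)=c'$, preserved central edges are bounded by $\min(\deg c,\deg c')=m$ (since $\deg c=3m$ and $\deg c'=m$), and preserved internal edges are bounded by the total $G$-internal count $mB-3m$. If $\pi(c)\ne c'$, then every non-hub vertex of $H$ has degree at most $2$, so preserved central edges drop to at most $2$, making the total strictly smaller as soon as $m>2$. The equality case forces both $\pi(c)=c'$ and preservation of every internal $G$-edge, which in turn forces each $P_i$ to be mapped to a contiguous segment of some $Q_j$ (the only paths available among non-hub vertices of $H$). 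A vertex-counting argument (using $\sum_i a_i=mB=\sum_j B$) then forces $\sum_{i\in S_j}a_i=B$ for each bin, and the \textsc{3-Partition} size restriction $B/4<a_i<B/2$ forces $|S_j|=3$, recovering a valid 3-partition.

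The main obstacle will be the rigidity analysis in the backward direction: forcing the hub-to-hub mapping and showing that every item path must appear as a contiguous segment of a single bin path. Both rely on the spider structure, in which the hub is the unique vertex of degree exceeding $2$, and on the fact that an internal edge of $G$ can only be preserved as a step along some $Q_j$. The degenerate small cases $m\le 2$ do not cleanly separate in the bound, but \textsc{3-Partition} is trivial there and can be handled by inspection.
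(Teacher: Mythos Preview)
Your reduction is correct and follows the same overall strategy as the paper's proof: reduce from \textsc{3-Partition} to a pair of spider-shaped trees whose legs encode items and bins, and set the mismatch threshold so that a feasible packing corresponds exactly to an optimal alignment. The forward direction and the counting via $\dist(G,H)^2=4(mB-\mathrm{pres}(\pi))$ are fine, and your backward analysis (hub must go to hub by the degree bound, then all internal edges preserved forces each item path into a single bin path, then vertex counting gives $\sum_{i\in S_j}a_i=B$) is sound.

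The main difference from the paper is that the paper does not rely on the bare degree argument to pin down $\pi(c)=c'$. Instead it first proves hardness for forests, then turns the forests into trees by adding hubs \emph{and} attaching $8m$ fresh leaves to every original path vertex; these leaf gadgets make all non-hub ``real'' vertices heavy, so any misplacement of the hub is immediately penalised by far more than the threshold. Your construction dispenses with the leaf gadget entirely: since in a spider the hub is the unique vertex of degree larger than $2$, mapping $c$ anywhere else caps the preserved central edges at $2$, which already loses against the budget once $m\ge3$. This is a genuine simplification---same idea, smaller instance, cleaner case analysis---at the cost of having to note separately that $m\le2$ is trivial. The paper's gadget, on the other hand, gives a more uniform argument that does not need the small-$m$ caveat and may be more robust if one wanted to tweak the reduction (e.g.\ to control degrees differently), but for the theorem as stated your version is the more economical proof.
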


On the other hand, if we restrict one of the input instances to be a path, 
the problem can be solved in polynomial time. 
The following theorem provides a positive example of tractability of $\SIM$. 
\begin{theorem}
\label{thm:path-tree}
An input instance $(G,H)$ of $\SIM$, where $G$ is a path and $H$ is a tree, can be solved in polynomial time. 
\end{theorem}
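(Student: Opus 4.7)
The plan is to reformulate the problem as a \emph{minimum path cover} problem on the tree $H$, and then to solve it by a classical linear-time dynamic program on trees.

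First, I would translate the objective. Since $G$ is a path with edges $\{i, i+1\}$ for $i \in [n-1]$, a permutation $\pi \in S_n$ corresponds to a linear arrangement $(\pi(1),\ldots,\pi(n))$ of $V(H)$. Let $M(\pi)$ denote the number of indices $i \in [n-1]$ with $\{\pi(i),\pi(i+1)\} \in E(H)$. Using the Claim from Section~\ref{sec:prelims} together with $|E(G)|=|E(H)|=n-1$, a short symmetric-difference calculation yields
\[
\|A_G^\pi - A_H\|_F^2 \;=\; 2\Delta(\pi) \;=\; 4\bigl((n-1)-M(\pi)\bigr).
\]
Thus computing $\dist(G,H)$ reduces to maximizing $M(\pi)$ over $\pi \in S_n$.

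Second, I would identify $\max_\pi M(\pi)$ with a path-cover parameter of $H$. The maximal runs of matched consecutive pairs in the arrangement $(\pi(1),\ldots,\pi(n))$ partition $V(H)$ into vertex-disjoint paths in $H$; if there are $k$ such runs, then $M(\pi)=n-k$. Conversely, any partition of $V(H)$ into $k$ vertex-disjoint paths of $H$ can be concatenated (in any order and either orientation per path) into an arrangement realising $M(\pi)=n-k$. Hence the \SIM-instance $(G,H)$ is equivalent to computing the \emph{minimum path cover number} $\mathrm{mpc}(H)$, the least number of vertex-disjoint paths of $H$ that together cover $V(H)$, and the final answer is $\dist(G,H)=2\sqrt{\mathrm{mpc}(H)-1}$.

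Finally, I would solve for $\mathrm{mpc}(H)$ by a standard tree DP. Root $H$ at an arbitrary vertex $r$, and for each vertex $v$ with subtree $T_v$ maintain $\alpha(v)$, the minimum number of paths in a path cover of $T_v$ in which $v$ is an endpoint of its path (and could still be extended through the parent edge), and $\beta(v)$, the analogous minimum in which $v$ is an interior vertex of its path. The recursion at $v$ selects, among the children $c_1,\ldots,c_d$, at most two whose edges $(v,c_j)$ are appended to $v$'s path: each selected child contributes $\alpha(c_j)$ and induces a merge-saving of one unit, while each unselected child contributes $\min(\alpha(c_j),\beta(c_j))$. With base cases $\alpha(v)=1$, $\beta(v)=+\infty$ at the leaves, $\min(\alpha(r),\beta(r))$ equals $\mathrm{mpc}(H)$ and is computable in $O(n)$ time by a postorder sweep. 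The one piece of book-keeping I would verify carefully is that the merge saving is counted exactly once per selected child and at most twice per internal vertex; modulo this routine check, the reduction and DP together yield the desired polynomial-time algorithm.
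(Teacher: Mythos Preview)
Your proposal is correct and follows essentially the same approach as the paper: both reduce the instance to covering the tree $H$ by vertex-disjoint paths (you phrase it as minimising the number of paths, the paper as maximising the number of covered edges, which are equivalent since $|E(H)|=n-1$), and both solve this by a two-state tree DP rooted at an arbitrary vertex. Your states $(\alpha,\beta)$ (endpoint vs.\ interior) are a trivial reparametrisation of the paper's $(A,B)$ (unconstrained vs.\ path starting at $v$), so the arguments coincide.
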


The above results exhibit the hardness of $\SIM$, and consequently, the hardness of the more general 
problems $\WSIM$ and $\MSIM$. 
Since the graphs (for instance cycles and paths) involved in the hardness reductions have adjacency 
matrices of high rank, it is natural to ask whether $\MSIM$ would become tractable for matrices of low rank. 
Our following theorem shows that $\MSIM$ is \NP-hard even for matrices of rank at most $2$. 
The underlying reason for hardness is the well-known problem \qap, which shares the optimization domain $\Sn$. 

\begin{theorem}
\label{thm:hard:matrices}
 $\MSIM$ is \NP-hard for symmetric matrices of rank at most $2$.
\end{theorem}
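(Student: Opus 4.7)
I would prove this theorem by reducing from \textsc{Equal-Cardinality Partition}: given nonnegative integers $a_1,\ldots,a_n$ with $n$ even and $\sum_i a_i=2S$, decide whether there is $I\subseteq[n]$ with $|I|=n/2$ and $\sum_{i\in I}a_i=S$. Its \NP-hardness follows from the standard \textsc{Partition} by padding the instance with $n$ zero entries.

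Given such an instance, I would construct two symmetric matrices $A,B\in\realmat{n}$ of rank at most $2$ as follows. Let $L=\{1,\ldots,n/2\}$ and $R=[n]\setminus L$, and let $A$ be the adjacency matrix of the complete bipartite graph $K_{L,R}$; the two vectors $(\mathbf{1}_L,\mathbf{1}_R)^T$ and $(\mathbf{1}_L,-\mathbf{1}_R)^T$ are eigenvectors of $A$ with eigenvalues $n/2$ and $-n/2$ respectively, while all other eigenvalues vanish, so $\rank(A)=2$. Setting $x=(a_1,\ldots,a_n)^T$, define $B:=xx^T$, which is positive semidefinite of rank $1$. Both matrices are constructible in polynomial time and satisfy the rank-at-most-$2$ restriction.

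The central computation relates $\trace{APBP^T}$ to the partition objective. Writing the permutation encoded by $P$ as $\pi$, so that $(PBP^T)_{ij}=x_{\pi(i)}x_{\pi(j)}$, and using that $A_{ij}=1$ iff exactly one of $i,j$ lies in $L$, one gets
\[
\trace{APBP^T}\;=\;2\!\!\sum_{i\in L,\,j\in R}\!\!x_{\pi(i)}\,x_{\pi(j)}\;=\;2\,t_\pi(2S-t_\pi),\qquad t_\pi:=\sum_{i\in L}x_{\pi(i)}.
\]
As $\pi$ varies over $\Sn$, the set $\pi(L)$ ranges over all size-$n/2$ subsets of $[n]$, so $t_\pi$ attains exactly the possible size-$n/2$ subset sums of $a_1,\ldots,a_n$. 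The strictly concave quadratic $f(t)=2t(2S-t)$ has its unique global maximum $2S^2$ at $t=S$, hence $\max_\pi\trace{APBP^T}=2S^2$ precisely when the partition instance is a yes-instance and is strictly smaller otherwise.

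Expanding $\|A-PBP^T\|_F^2=\|A\|_F^2+\|B\|_F^2-2\trace{APBP^T}$, the threshold $d:=\|A\|_F^2+\|B\|_F^2-4S^2$ turns \textsc{Equal-Cardinality Partition} into an equivalent $\MSIM$ decision instance on symmetric rank-$\le 2$ matrices, completing the reduction. The only substantive ingredient is the unique-maximum property of the concave quadratic $f$, providing the necessary gap between yes- and no-instances; the rest is a routine QAP-style expansion of the Frobenius distance.
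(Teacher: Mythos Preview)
Your proof is correct and follows essentially the same template as the paper's: reduce from (equal-cardinality) \textsc{Partition}, take one matrix to be the rank-$1$ outer product $xx^T$, take the other to be a rank-$2$ ``bipartition indicator'' matrix, and observe that the trace term becomes a quadratic in the subset sum whose optimum is attained iff the partition is balanced. The only difference is cosmetic: the paper uses the negative-semidefinite block matrix $-\mathbf 1_L\mathbf 1_L^T-\mathbf 1_R\mathbf 1_R^T$ where you use the indefinite bipartite adjacency $\mathbf 1_L\mathbf 1_R^T+\mathbf 1_R\mathbf 1_L^T$; since these differ by the all-ones matrix $J$, which contributes a $\pi$-independent constant to the trace, the two reductions are equivalent up to an additive shift.
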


The key to the above reduction is the fact that one of the matrices has non-negative Eigenvalues 
while the other matrix has non-positive Eigenvalues.
We show that the $\MSIM$ is \NP-hard even for positive semi-definite matrices. 
The main idea is to reformulate the hardness reduction in Theorem \ref{thm:hard:graphs} in terms of Laplacian matrices. 

\begin{theorem}
\label{thm:hard:psd}
 $\MSIM$ is \NP-hard for positive semi-definite matrices. 
\end{theorem}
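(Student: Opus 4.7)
The plan is to give a polynomial-time reduction from $\SIM$ --- NP-hard by Theorem~\ref{thm:hard:graphs} --- to $\MSIM$ restricted to positive semi-definite matrices. The central observation is that shifting both input matrices by a common scalar multiple of the identity leaves the $\MSIM$ objective unchanged, because permutation conjugation fixes the identity matrix. By choosing the shift large enough, the shifted matrices become PSD while preserving the optimum, which immediately transfers the hardness.

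Concretely, given a $\SIM$ instance $(G,H)$ on $n$ vertices with adjacency matrices $A_G,A_H$, I will output the $\MSIM$ instance
\[
\hat A_G \;:=\; nI - A_G, \qquad \hat A_H \;:=\; nI - A_H.
\]
Both matrices are PSD, because the spectral radius of the adjacency matrix of a simple $n$-vertex graph is bounded by its maximum degree, and hence by $n-1$, so every eigenvalue of $\hat A_G$ (and of $\hat A_H$) is at least $1$. This is the ``Laplacian reformulation'' signalled by the hint: one may equivalently write $\hat A_G = L_G + (nI - D_G)$, exhibiting it as the Laplacian of $G$ augmented with a nonnegative diagonal correction that makes every vertex effectively $n$-regular, so PSD-ness is manifest from the PSD-ness of $L_G$.

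For correctness, I will observe that for every permutation matrix $P$ the equality $P(nI)P^{-1} = nI$ yields
\[
\hat A_G - P\hat A_H P^{-1} \;=\; (nI - A_G) - P(nI - A_H)P^{-1} \;=\; -\bigl(A_G - PA_HP^{-1}\bigr),
\]
whose Frobenius norm equals that of $A_G - PA_HP^{-1}$. Hence the $\MSIM$ optimum for $(\hat A_G, \hat A_H)$ coincides exactly with $\dist(G,H)$, and the reduction (adding $n$ to each diagonal entry) is polynomial time. NP-hardness of $\MSIM$ on PSD matrices follows.

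There is no substantial obstacle: the whole argument pivots on the trivial algebraic identity $PIP^{-1}=I$ together with the elementary spectral-radius bound used to certify PSD-ness. The only stylistic choice is which shift to use --- any $cI$ with $c \ge \max\deg(G),\max\deg(H)$ works, and the Laplacian viewpoint simply explains why the choice $c=n$ is natural.
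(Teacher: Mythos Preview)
Your proposal is correct and rests on the same idea as the paper: add a scalar multiple of the identity to the adjacency matrices so that the results are positive semi-definite, exploiting $P I P^{-1}=I$ to leave the objective (essentially) unchanged. The paper's execution differs slightly: it works with the specific reduction of Theorem~\ref{thm:hard:graphs} from Hamiltonian Cycle in $3$-regular graphs and uses the \emph{Laplacians} $L_{C_n}=2I-A_{C_n}$ and $L_G=3I-A_G$ directly; because the two shifts ($2I$ versus $3I$) are unequal, one picks up an additive constant $\|I_n\|_F^2=n$ rather than an exact equality of objectives. Your version, shifting both matrices by the same $nI$, avoids the additive constant and does not rely on regularity, so it applies to an arbitrary $\SIM$ instance. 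The difference is only cosmetic: the paper stays closer to the Laplacian motivation, while your argument is marginally cleaner and more general.
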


In fact, we show that the problem remains \NP-hard, even if one of the matrices is of rank $1$. 
The proof follows by modifying the matrices in the proof of Theorem \ref{thm:hard:matrices}
so that they are positive semi-definite.
\begin{theorem}
\label{thm:hard:psd:bnd}
 $\MSIM$ is \NP-hard for positive semi-definite matrices, even if one of the matrices has rank $1$. 
\end{theorem}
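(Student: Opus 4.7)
The plan is to reuse the reduction of Theorem~\ref{thm:hard:matrices} and apply a spectral shift to the non-PSD side, turning it into a PSD matrix while leaving the rank-$1$ PSD side untouched. I assume, consistent with the hint preceding this theorem and with the ``rank at most $2$'' statement of Theorem~\ref{thm:hard:matrices}, that the reduction there produces an $\MSIM$ instance $(A,B)$ in which $A = \alpha aa^T$ is rank-$1$ positive semi-definite (so $\alpha > 0$) and $B$ is symmetric with rank at most $2$ and non-positive eigenvalues.

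The reduction for this theorem would then proceed as follows. Compute $c := -\lambda_{\min}(B) \ge 0$ (polynomial-time, since $B$ has rank at most $2$, and polynomially bounded in bit-length) and set $B' := B + c I_n$. Shifting all eigenvalues of $B$ upward by $c$ yields $\lambda_{\min}(B') = 0$, hence $B' \succeq 0$; meanwhile $A$ is untouched and remains rank-$1$ PSD, so $(A, B')$ satisfies the hypotheses of the theorem. The heart of the proof is then to verify that the optimal permutations for $(A, B)$ and $(A, B')$ coincide. For any permutation matrix $P$, using $P P^T = I_n$, $\trace{PBP^T} = \trace{B}$, and the symmetry of $A$ and $B$:
\begin{align*}
\|A - PB'P^T\|_F^2 &= \|A - PBP^T - cI_n\|_F^2 \\
&= \|A - PBP^T\|_F^2 - 2c\bigl(\trace{A} - \trace{B}\bigr) + c^2 n.
\end{align*}
The correction terms depend only on $A, B, c, n$ and not on $P$, so $\argmin_P \|A - PB'P^T\|_F = \argmin_P \|A - PBP^T\|_F$. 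Consequently any algorithm solving $\MSIM$ under the stated restrictions also solves $(A,B)$, which is NP-hard by Theorem~\ref{thm:hard:matrices}, and NP-hardness transfers.

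The main obstacle I anticipate is structural rather than technical: one must confirm that the reduction underlying Theorem~\ref{thm:hard:matrices} actually produces a rank-$1$ PSD matrix on one side, rather than a generic rank-$2$ PSD matrix of the form $a_1 a_1^T + a_2 a_2^T$. If the latter were the case, the spectral shift preserves PSD-ness but not the rank-$1$ property; one would then need to redesign the source reduction so that a single indicator-type vector already encodes the hard instance on the PSD side (as happens, for example, when the source problem is cast as maximizing a single quadratic form $x^T M x$ over $\{0,1\}$-vectors of fixed Hamming weight). After such a structural check, the spectral shift $B' := B + cI_n$ closes the argument with no further effort.
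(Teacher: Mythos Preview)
Your proposal is correct and is essentially the paper's own argument: keep the rank-$1$ PSD matrix $C=uu^T$ from the proof of Theorem~\ref{thm:hard:matrices} unchanged, shift the other matrix by $cI_n$ with $c=-\lambda_{\min}(B)$ (which equals $n$ there), and observe that the Frobenius objective changes by a permutation-independent constant. Your structural worry is unfounded---the reduction in Theorem~\ref{thm:hard:matrices} explicitly builds $C=uu^T$ with $u=(a_1,\ldots,a_{2n})$, so the PSD side is genuinely rank~$1$.
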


Therefore, the realm of tractability for $\MSIM$ is restricted 
to positive definite matrices of low rank. In the next section, 
we prove algorithmic results in this direction. 

\section{Algorithmic Results}
\label{sec:algo}
In this section, we present the main algorithmic result of this paper. 
As established in the previous section, the domain of tractability for $\MSIM$ 
is restricted to p.s.d. matrices with low rank. The main theorem of this section 
is stated as follows. Given an instance $(A,B)$ of $\MSIM$, let $\rank(A),\rank(B)$ $\leq k$. Let $p$ be the clustering number of $B$. 

\begin{theorem}\label{thm:main}
There is a \rtime algorithm for $\MSIM$. Here, the $\bigOstar$ notation hides factors polynomial in the size of input representation. 
\end{theorem}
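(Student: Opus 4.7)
The plan is to reformulate $\MSIM$ as the maximization of a convex quadratic function over a highly structured discrete set of partitions, then exploit convexity via the linearization tools of Section~\ref{subsec:convex} to reduce the optimum to a hyperplane-configuration enumeration of size $n^{\bigO(kp^2)}$.

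\textbf{Reformulation.} Since $\frob{A-PBP^T}^2 = \frob{A}^2 + \frob{B}^2 - 2\trace{APBP^T}$, $\MSIM$ is equivalent to maximizing $\trace{APBP^T}$ over permutations $P$. Take spectral decompositions $\specA$ and $\specB$ with $\Lambda,\Gamma\succeq 0$ of size at most $k$. Because $B$ has clustering number $p$, we can write $V = M V'$, where $M\in\{0,1\}^{n\times p}$ is the cluster-membership matrix of $B$ and $V'\in\reals^{p\times k}$ collects the $p$ distinct spectral rows of $B$. Then $PV=(PM)V'$, so $P$ enters the objective only through the assignment $f:[n]\to[p]$ sending vertex $i$ to the cluster containing $\pi(i)$. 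Setting $Y_j := \sum_{i:f(i)=j} U^i \in\reals^k$ and $\tilde B := V'\Gamma(V')^T$, which is $p\times p$ and p.s.d., the objective becomes
\[
F(Y_1,\ldots,Y_p)\;=\;\trace{\Lambda\,Y\,\tilde B\,Y^T},\qquad Y=[Y_1|\cdots|Y_p],
\]
subject to $|f^{-1}(j)|$ matching the size of the $j$-th cluster of $B$.

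\textbf{Linearization.} Because $\Lambda$ and $\tilde B$ are both p.s.d., $F$ is convex in $Y\in\reals^{kp}$ by Lemma~\ref{lem:psdconv}. Applying the linearization lemma of Section~\ref{subsec:convex} (with trivial $G$-component and $H=F$), for any optimal $Y^*$ there exists $\mu^*=(\mu_1^*,\ldots,\mu_p^*)\in\reals^{kp}$ such that the linear surrogate
\[
L(Y)\;=\;\sum_{j=1}^{p}\innerpd{\mu_j^*}{}{Y_j}\;=\;\sum_{i=1}^{n}\innerpd{\mu_{f(i)}^*}{}{U^i}
\]
has $Y^*$ as a maximizer over feasible partitions, and, crucially, \emph{every} maximizer of $L$ is also a maximizer of $F$. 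This no-spurious-maxima clause is what makes the enumeration below sound.

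\textbf{Hyperplane characterization and enumeration.} Maximizing the linear $L$ over assignments with prescribed block sizes is a classical assignment problem. A standard exchange argument shows that for each pair $j\ne l$, the restriction of the optimum to $f^{-1}(j)\cup f^{-1}(l)$ is a threshold split along $x\mapsto \innerpd{\mu_j^*-\mu_l^*}{}{x}$: there is a scalar $t_{jl}$ with $i\in f^{-1}(j)$ iff $\innerpd{\mu_j^*-\mu_l^*}{}{U^i}\ge t_{jl}$. Hence the optimal partition is completely determined by the position of the $n$ points $U^1,\ldots,U^n\in\reals^k$ relative to $\binom{p}{2}$ hyperplanes in $\reals^k$; this is the content of the anticipated Lemma~\ref{lem:lin-sep}. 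The number of combinatorially distinct hyperplanes on $n$ points in $\reals^k$ is $\bigO(n^k)$, so the number of relevant $\binom{p}{2}$-tuples is $n^{\bigO(kp^2)}$. The algorithm enumerates all such tuples, uses Lemma~\ref{lem:sos} to handle degeneracies via Simulation-of-Simplicity (resolving side-of-hyperplane queries in $\bigO(k^k)$ time), extracts from each tuple the implied assignment $f$ via the pairwise rules, discards those that are internally inconsistent or violate the block-size constraints, evaluates $F$ on the remaining candidates, and returns the best.

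\textbf{Main obstacle.} The delicate step is proving the anticipated Lemma~\ref{lem:lin-sep}: one must show that \emph{every} optimal partition (not only the particular one associated with a given $Y^*$) arises from some pairwise-hyperplane configuration swept by the enumeration, and that the implied assignment can be extracted consistently and checked for feasibility. This rests on combining the no-spurious-maxima clause of the linearization lemma with the exchange argument, and on a careful application of Simulation-of-Simplicity so that boundary pathologies and ties do not render an optimal configuration ambiguous or cause it to be missed.
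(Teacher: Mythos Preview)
Your proposal is correct and follows essentially the same route as the paper: reformulate $\MSIM$ as maximizing a p.s.d.\ quadratic over size-constrained partitions of the spectral rows (the paper calls this \qvp), linearize via the convexity tools of Section~\ref{subsec:convex}, deduce that optimal partitions are pairwise weakly linearly separated (the paper's Lemma~\ref{lem:lin-sep}), and enumerate the $n^{\bigO(kp^2)}$ hyperplane configurations with Simulation-of-Simplicity handling degeneracies. The only cosmetic difference is that you apply the linearization lemma once to $F$ as a single convex function on $\reals^{kp}$, whereas the paper first rewrites $F=\sum_{q=1}^k \lambda_q\innerpd{X_q}{K}{X_q}$ coordinate-wise and invokes Corollary~\ref{coro:conv-lin}; both yield the same linear surrogate $L$, and your swap-based exchange argument for the pairwise separation is in fact slightly cleaner than the paper's.
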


In order to prove Theorem \ref{thm:main}, we define a closely related optimization problem, called the \textsc{Quadratic-Vector-Partition} (\qvp).
Let $\mathcal{P}$ be the set of all (ordered) partitions of $[n]$ into $p$ sets
of size $n_1,\dots,n_p$. I.e., an element $P \in \mathcal{P}$ 
is an ordered partition $T_1 \cup \dots \cup T_p$ of $[n]$, where $|T_l| = n_l, l \in [p]$.
Given a set $W$ of $n$ vectors $\{w_1,\dots,w_n\} \subseteq \reals^k$,
we will employ two important notations. Denote $W[T_i]$ to be the point-set $\{w_j \,\vert\, j \in T_i\}$ corresponding to $T_i \subseteq [n]$. 
Denote $W^T = \sum_{i\in T} w_i$, $T \subseteq [n]$.  

The input instance to \qvp is a set $W$ of $n$ vectors $\{w_1,\dots,w_n\} \subseteq \reals^k$, along with two matrices $K$ and $\Lambda$. The matrix $K$ is a p.s.d matrix of size $p \times p$. The matrix $\Lambda$ is a \emph{diagonal} matrix with $k$ positive entries. The objective is to search for a partition $P \in \mathcal{P}$ which maximizes the following quadratic objective function $F$.
\[
 F(P) = \displaystyle\sum_{l,m \in [p]} K_{lm} \left\langle W^{T_l},W^{T_m}\right\rangle_{\Lambda}.
\]
Informally, the goal is to `cluster' the set $W$ into $p$ sets $W_1,\dots,W_p$ of cardinalities $n_1,\dots,n_p$ such that the quadratic function above is maximized. 
The connection to $\MSIM$ arises due to the following observation. We can interpret a permutation $\pi$ as a bijection $\pi:\rows(U) \rightarrow \rows(V)$ where $\specA$ and $\specB$ are the respective spectral decompositions. Since $\rank(A),\rank(B) \leq k$, we must have $U,V \in \reals^{n \times k}$ and consequently, $\rows(U),\rows(V) \subseteq \reals^k$. Since 
the set $\rows(V)$ has only $p$ distinct tuples (the clustering number), it suffices to examine the partitions of $\rows(U)$ into $p$ sets of certain fixed cardinalities. It remains then to show that the minimization of the objective function for $\MSIM$ can be reformulated as the maximization of the objective function for $\qvp$.

The proof of Theorem \ref{thm:main} proceeds in three steps. 
First, in Section \ref{subsec:reduce}, we show a reduction from $\MSIM$ to \qvp. In particular, the dimension $k$ and the parameter $p$ for the \qvp instance are equal to the rank $k$ and the clustering number $p$ in Theorem \ref{thm:main} 
respectively. Second, in Section \ref{subsec:optimal}, we show that the optimal solutions for a \qvp instance have a nice geometrical structure. In particular, the convex-hulls of the point-sets in the partition are mutually disjoint (upto some caveats). Third, in Section \ref{subsec:algo}, we describe a \rtime algorithm for \qvp. The algorithm essentially enumerates all partitions with the optimal solution structure. This finishes the proof of Theorem \ref{thm:main}. 

\subsection{Reduction to \qvp}\label{subsec:reduce}

In this subsection, we prove the following reduction lemma. 
Given two matrices $A,B \in \realmat{n}$, let $\rank(A), \rank(B)$ $\leq k$. 
Let $p$ be the cluster-number of $B$.
\begin{lemma}\label{lem:reduce}
 Given a $\MSIM$ instance $(A,B)$, we can compute a \qvp-instance
 $W,K,\Lambda$,  where $W \subseteq \reals^k$ of size $n$ and $K\in \reals^{p\times p}$, $\Lambda \in \reals^{k \times k}$, in $\bigOstar(1)$ time such that the following holds.
Given an optimal solution for the \qvp-instance $W \subseteq \reals^k$, 
we can compute $\min_{\pi \in S_n} \frob{A^{\pi} - B}$ in $O(1)$ time. 
\end{lemma}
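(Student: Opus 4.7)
The plan is to reduce $\MSIM$ to $\qvp$ by using the spectral decompositions of $A$ and $B$, and exploiting the clustering structure of $B$ to convert the optimization over permutations into optimization over ordered partitions. First, since $\|A^\pi - B\|_F^2 = \|A\|_F^2 + \|B\|_F^2 - 2\operatorname{Tr}(A^\pi B)$ and the first two terms are invariant under permutation, minimizing $\|A^\pi - B\|_F$ is equivalent to maximizing $\operatorname{Tr}(A^\pi B) = \sum_{i,j} A_{ij} B_{\sigma(i), \sigma(j)}$ (after re-indexing with $\sigma=\pi^{-1}$) over $\sigma \in S_n$. Given the optimal value $F^*$ of this maximization, the desired distance is $\sqrt{\|A\|_F^2 + \|B\|_F^2 - 2F^*}$, computable in $O(1)$ once the constants $\|A\|_F^2, \|B\|_F^2$ are precomputed.

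Next I would perform the key algebraic manipulation. Compute spectral decompositions $\specA$ and $\specB$ in $\bigOstar(1)$ time, with $U,V \in \reals^{n\times k}$ and $\Lambda,\Gamma$ diagonal of size $k\times k$. Writing $A_{ij} = \sum_s \Lambda_{ss} U_{is}U_{js}$ and similarly for $B$, expanding and swapping sums yields
\[
\sum_{i,j} A_{ij} B_{\sigma(i), \sigma(j)} \;=\; \sum_{s,t} \Lambda_{ss}\,\Gamma_{tt} \left(\sum_i U_{is} V_{\sigma(i), t}\right)^{\!2}.
\]
Since $B$ has clustering number $p$, the set $\rows(V)$ consists of $p$ distinct vectors $v^{(1)},\dots,v^{(p)}$ with multiplicities $n_1,\dots,n_p$. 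Any $\sigma$ induces an ordered partition $(T_1,\dots,T_p)$ of $[n]$ with $|T_l|=n_l$, where $T_l$ collects the $i$ such that $\sigma(i)$ lies in the $l$-th cluster; conversely, any such ordered partition is realized by some $\sigma$, and permutations within a cluster do not change the value. Setting $W := \rows(U) \subseteq \reals^k$, the inner sum becomes $\sum_i U_{is}V_{\sigma(i),t} = \sum_l v^{(l)}_t (W^{T_l})_s$, so squaring and rearranging gives
\[
\sum_{i,j} A_{ij} B_{\sigma(i), \sigma(j)} \;=\; \sum_{l,m\in[p]} K_{lm}\, \innerpd{W^{T_l}}{\Lambda}{W^{T_m}},
\]
where $K_{lm} := \sum_t \Gamma_{tt}\, v^{(l)}_t v^{(m)}_t$. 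This is exactly the $\qvp$ objective with the chosen $W$, $\Lambda$, and $K$.

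It remains to verify the structural hypotheses of the $\qvp$ instance. The matrix $\Lambda$ is the diagonal eigenvalue matrix of $A$, whose entries are the positive eigenvalues of the p.s.d.\ matrix $A$ (we discard zero eigenvalues and work in dimension $\tilde{k}=\rank(A)\le k$). The matrix $K$ factors as $K = V'\,\Gamma\,(V')^T$, where $V' \in \reals^{p\times k}$ has rows $v^{(1)},\dots,v^{(p)}$; since $B$ is p.s.d., $\Gamma$ has nonnegative entries, and $K$ is therefore p.s.d.\ of size $p\times p$. The set $W$ has cardinality $n$ and lies in $\reals^k$. All of these are produced in $\bigOstar(1)$ time from the spectral decompositions.

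The main obstacle is really just the expansion step: keeping the four summations ($i,j,s,t$) straight and correctly grouping the indices by clusters so the sum factors into the bilinear form over $l,m$. Once the computation is written out it is routine, but it requires being precise about the two auxiliary indexing conventions ($W^T$ as a sum of vectors versus $(W^T)_s$ as its $s$-th coordinate, and the correspondence between $\sigma$ and the ordered partition $(T_1,\dots,T_p)$). The recovery of the optimum distance from $F^*$ via the identity $\min_\pi \|A^\pi - B\|_F^2 = \|A\|_F^2 + \|B\|_F^2 - 2F^*$ is then immediate, completing the reduction.
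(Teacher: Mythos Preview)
Your proposal is correct and follows essentially the same approach as the paper: reduce to maximizing $\operatorname{Tr}(A^\pi,B)$ via the Frobenius expansion, plug in the spectral decompositions of $A$ and $B$, and group the resulting double sum by the $p$ clusters of $\rows(V)$ to obtain the $\qvp$ objective with $W=\rows(U)$, $K_{lm}=\langle \tilde V^l,\tilde V^m\rangle_\Gamma$, and the same $\Lambda$. The only cosmetic difference is that you expand coordinatewise in $s,t$ before regrouping, whereas the paper keeps everything in inner-product notation; you are also slightly more explicit than the paper in checking that $\Lambda$ has positive diagonal and that $K$ is p.s.d.
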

Therefore, it suffices to design a \rtime algorithm for \qvp for the proof of Theorem \ref{thm:main}. 
The proof of Lemma~\ref{lem:reduce} is deferred to the appendix.

\subsection{Optimal Structure of \qvp}\label{subsec:optimal}

In this section, we show that the optimal solutions for a \qvp
instance have, in fact, a nice geometrical structure.  Let
$\Omega^* \subseteq \mathcal{P}$ denote the set of optimal solutions
for a \qvp instance $W,K,\Lambda$,
where $W \subseteq \reals^k$ of size $n$. Recall from Section \ref{sec:prelims} 
  that a partition $W_1,\dots,W_p$ of $W$ is \emph{mutually linearly
    separated} if for every $i,j \in [n]$, there exists a hyperplane
  $H_{ij}$ which weakly linearly separates $W_i$ and $W_j$.

\begin{lemma}\label{lem:lin-sep}
Let $P=(T_1,\dots,T_p) \in \Omega^*$ be an optimal partition for a \qvp instance $W,K,\Lambda$.
The corresponding partition $W[T_1],\dots,W[T_p]$ is mutually linearly separated. 
\end{lemma}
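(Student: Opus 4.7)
The plan is to exploit the convexity of $F$ viewed as a quadratic form on $\reals^{kp}$ and reduce the separability of $P^*$ to a swap-optimality condition for an induced linear functional, in the spirit of Corollary~\ref{coro:conv-lin}.

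I first re-parameterize $F$ through the block vector $X := (W^{T_1}, \ldots, W^{T_p}) \in \reals^{kp}$; a direct expansion yields
\[
F(P) \;=\; X^{T}(K\otimes\Lambda)X.
\]
Since $K$ is p.s.d.\ and $\Lambda$ is diagonal with positive entries, the Kronecker product $K\otimes\Lambda$ is p.s.d., so by Lemma~\ref{lem:psdconv} the function $F$ is convex in $X$ on $\reals^{kp}$. Let $X^* := (W^{T_1^*}, \ldots, W^{T_p^*})$ correspond to the optimum $P^*$, and set $\mu^* := \nabla F(X^*) = 2(K\otimes\Lambda)X^*$ with block partition $\mu^* = (\mu_1^*, \ldots, \mu_p^*)$, $\mu_l^* \in \reals^k$. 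Applying Theorem~\ref{thm:convlin} to $F$ gives $F(X(P)) - F(X^*) \geq \langle \mu^*, X(P) - X^* \rangle$ for every feasible $P$; combined with the $F$-maximality of $P^*$, this yields $\langle \mu^*, X(P) \rangle \leq \langle \mu^*, X^* \rangle$. Hence $P^*$ also maximizes the linear functional
\[
L(P) \;=\; \sum_{l=1}^{p} \langle \mu_l^*, W^{T_l} \rangle \;=\; \sum_{l=1}^{p} \sum_{i \in T_l} \langle \mu_l^*, w_i \rangle
\]
over the feasible set of partitions with block sizes $n_1, \ldots, n_p$.

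The pairwise separation now falls out of a swap argument. Fix $l \neq m$ in $[p]$, $i \in T_l^*$, $j \in T_m^*$, and let $P'$ be the partition obtained from $P^*$ by swapping $i$ and $j$; it is feasible. A one-line computation gives $L(P^*) - L(P') = \langle \mu_l^* - \mu_m^*, w_i - w_j \rangle$, which must be non-negative by $L$-optimality of $P^*$. Taking $c_{lm} := \min_{i \in T_l^*} \langle \mu_l^* - \mu_m^*, w_i \rangle$, every $w_j$ with $j \in T_m^*$ then satisfies $\langle \mu_l^* - \mu_m^*, w_j \rangle \leq c_{lm}$, and the hyperplane
\[
H_{lm} \;:=\; \{\,w \in \reals^k \,:\, \langle \mu_l^* - \mu_m^*, w \rangle = c_{lm}\,\}
\]
witnesses the weak separation of $W[T_l^*]$ from $W[T_m^*]$, completing the proof for all pairs.

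The main delicate point is the degenerate case $\mu_l^* = \mu_m^*$, in which $H_{lm}$ collapses. A direct expansion of $F(P') - F(P^*)$ in that situation specializes to $(K_{ll} + K_{mm} - 2K_{lm})\,\|w_i - w_j\|_\Lambda^2 \leq 0$, so p.s.d.-ness of $K$ forces either the swap to preserve $F$ exactly (placing $P^*$ in an equivalence class of co-optimal partitions from which a linearly-separated representative can be chosen) or $w_i = w_j$ (which leaves the underlying sets $W[T_l^*]$, $W[T_m^*]$ unchanged). A formally clean way to bypass the case altogether is to invoke the Simulation-of-Simplicity perturbation of Lemma~\ref{lem:sos}, under which the relevant hyperplane data are in general position and the argument above goes through verbatim.
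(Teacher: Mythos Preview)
Your argument follows the same blueprint as the paper's—linearize the convex objective at the optimum and then derive pairwise separation from optimality of the resulting linear functional—but packages it more directly. Where the paper splits the linearization into three claims (rewrite $F$ as $\sum_q \lambda_q \innerpd{X_q}{K}{X_q}$ over $\reals^p$, apply Corollary~\ref{coro:conv-lin} coordinate by coordinate, then revert to $\reals^k$), you collapse all of this into a single application of Theorem~\ref{thm:convlin} to the Kronecker-product form $X^{T}(K\otimes\Lambda)X$ on $\reals^{kp}$, which is cleaner and lands you straight at $L(P)=\sum_l\langle\mu_l^*,W^{T_l}\rangle$. Your element-wise swap also sidesteps a wrinkle in the paper's ``bulk'' reassignment of $T_q\cup T_r$, which as written does not obviously respect the cardinality constraints $|T_l|=n_l$.

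On the degenerate case $\mu_l^*=\mu_m^*$: you are right to flag it (the paper's proof is silent here and shares the same gap), but your proposed fixes do not close it. The Simulation-of-Simplicity perturbation of Lemma~\ref{lem:sos} acts on the points $w_i$, not on $K$; when the degeneracy stems from $K$ itself—take $K$ the all-ones matrix, so that $\mu_l^*=2\Lambda\sum_a W^{T_a^*}$ is independent of $l$ regardless of point positions—perturbing $W$ cannot separate the $\mu_l^*$. In that extreme example $F$ is constant, every partition is optimal, yet most partitions are \emph{not} mutually linearly separated, so the lemma as stated is actually too strong. What the downstream algorithm needs is only that \emph{some} optimal partition is mutually linearly separated, and your swap-invariance computation $F(P')-F(P^*)=(K_{ll}+K_{mm}-2K_{lm})\|w_i-w_j\|_\Lambda^2$ points in that direction; but neither your write-up nor the paper's establishes even this weaker statement carefully in the degenerate regime.
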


The proof of Lemma \ref{lem:lin-sep} proceeds in three steps. Claim \ref{clm:lin-sep1} 
shows that we can reformulate \qvp as a convex programming problem in $\reals^p$.
Claim \ref{clm:lin-sep2} stipulates certain necessary conditions for optimality, in this 
reformulated version. Using Claim \ref{clm:lin-sep3}, we revert back to the original \qvp formulation in $\reals^k$. 
This allows us to interpret the optimality conditions in Claim \ref{clm:lin-sep2} as the mutually linearly separated property in Lemma \ref{lem:lin-sep}.

Given a partition $T_1,\dots,T_p$ of $W$, let $X_q$ be 
the vector of length $p$ corresponding to the $q^{th}$ coordinates of vectors $W^{T_1},\dots,W^{T_p}$. 
Formally, $X_q$ denotes the vector $[ (W^{T_1})_q \dots (W^{T_p})_q] \in \reals^p$, $q \in [k]$. 
Recall that $\Lambda$ is a diagonal matrix with positive entries, say $\lambda_1,\dots,\lambda_k$.
The following claim shows that we can describe our problem as a convex programming problem in $\reals^p$. 
The objective function is a sum of $k$ vector norms (squared). 
\begin{claim}\label{clm:lin-sep1} 
 $\Omega^* = \argmax_{P \in \mathcal{P}}  \displaystyle\sum_{q=1}^{k} \lambda_q \innerpd{X_q}{K}{X_q}$.
\end{claim}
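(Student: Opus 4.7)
The plan is to prove the claim by a direct algebraic rewriting of the objective function $F(P)$; there is nothing deep going on, but the rewriting is the pivotal step that enables the convexity machinery (Lemma \ref{lem:psdconv} and Corollary \ref{coro:conv-lin}) to be applied in the two subsequent claims of the proof of Lemma \ref{lem:lin-sep}. My plan is to show that the scalar $F(P)$ decomposes as $\sum_{q=1}^{k} \lambda_q \innerpd{X_q}{K}{X_q}$ as a pointwise identity in $P \in \mathcal{P}$, from which the equality of argmax sets is immediate.

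First I would expand $\innerpd{W^{T_l}}{\Lambda}{W^{T_m}} = (W^{T_l})^T \Lambda W^{T_m}$. Because $\Lambda$ is diagonal with entries $\lambda_1, \dots, \lambda_k$, this equals $\sum_{q=1}^k \lambda_q (W^{T_l})_q (W^{T_m})_q$. Substituting into the definition of $F(P)$ and swapping the order of summation gives
\[
F(P) \;=\; \sum_{l,m \in [p]} K_{lm} \sum_{q=1}^{k} \lambda_q (W^{T_l})_q (W^{T_m})_q \;=\; \sum_{q=1}^{k} \lambda_q \sum_{l,m \in [p]} K_{lm} (W^{T_l})_q (W^{T_m})_q.
\]
The interchange is valid because both sums are finite.

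Next I would identify the inner double sum. By the definition of $X_q$, we have $(X_q)_l = (W^{T_l})_q$ for every $l \in [p]$, hence
\[
\sum_{l,m \in [p]} K_{lm} (W^{T_l})_q (W^{T_m})_q \;=\; \sum_{l,m \in [p]} K_{lm} (X_q)_l (X_q)_m \;=\; X_q^T K X_q \;=\; \innerpd{X_q}{K}{X_q}.
\]
Combining with the preceding display yields $F(P) = \sum_{q=1}^{k} \lambda_q \innerpd{X_q}{K}{X_q}$ for every $P \in \mathcal{P}$. Since the two expressions agree as functions of $P$ on the finite domain $\mathcal{P}$, they have the same set of maximizers, so $\Omega^* = \argmax_{P \in \mathcal{P}} \sum_{q=1}^{k} \lambda_q \innerpd{X_q}{K}{X_q}$, as claimed.

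I do not expect any real obstacle in this step; the only subtlety is conceptual rather than technical. The rewriting reorganizes the quadratic form: instead of viewing $F$ as a single quadratic form on the $kp$ coordinates of $(W^{T_1}, \dots, W^{T_p})$, it exhibits $F$ as a sum of $k$ quadratic forms, each in $\reals^p$ and each involving the same p.s.d. matrix $K$. Because $K \succeq 0$ and each $\lambda_q > 0$, Lemma \ref{lem:psdconv} makes every summand a convex function of its argument $X_q$, which is exactly the structural hypothesis needed to linearize with Corollary \ref{coro:conv-lin} in the next step.
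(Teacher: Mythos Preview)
Your proposal is correct and follows essentially the same approach as the paper: expand $\innerpd{W^{T_l}}{\Lambda}{W^{T_m}}$ using the diagonality of $\Lambda$, swap the order of summation, and identify the inner double sum as $\innerpd{X_q}{K}{X_q}$. Your write-up is in fact more detailed than the paper's (which presents the same chain of equalities without commentary), and your closing remark about why this decomposition enables the convexity machinery is accurate.
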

The proof is deferred to the appendix.

The second step constitutes the key insight to the proof of Lemma \ref{lem:lin-sep}. We show that an optimal solution for the convex program of Claim \ref{clm:lin-sep1} must be an optimal solution for some \emph{linear} program. The proof of this claim builds on the statements in Subsection \ref{subsec:convex} about linearization of convex objective functions. 
Recall that the set $\Omega^* \subseteq \mathcal{P}$ denote the set of optimal solutions
for the \qvp instance $W,K,\Lambda$.
\begin{claim} \label{clm:lin-sep2} 
For every $P^* \in \Omega^*$, there exist vectors $\mu_1^*,\dots,\mu_k^* \in \reals^p$ such that 
$P^*$ is an optimal solution for the objective function
\[
L = \argmax_{P \in \mathcal{P}} \displaystyle\sum_{q=1}^{k} \lambda_q \innerpd{\mu_q^*}{K}{X_q}. 
\]
Moreover, the set of optimal solutions of $L$ is a subset of
$\Omega^*$.
\end{claim}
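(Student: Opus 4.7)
The plan is to mimic the proof of the lemma underlying Corollary \ref{coro:conv-lin}, but to keep the $K$-inner-product structure that appears naturally in our objective after applying Claim \ref{clm:lin-sep1}. Thanks to Claim \ref{clm:lin-sep1}, it suffices to work with the function $F(X) = \sum_{q=1}^k G_q(X_q)$, where $G_q(X_q) := \lambda_q \innerpd{X_q}{K}{X_q}$, over the finite set $\Omega \subseteq \reals^{kp}$ consisting of the tuples $(X_1,\ldots,X_k)$ arising from partitions $P \in \mathcal{P}$. Since $\lambda_q > 0$ and $K \succeq 0$, the matrix $\lambda_q K$ is p.s.d., so each $G_q$ equals $Q_{\lambda_q K}$ and is convex by Lemma \ref{lem:psdconv}.

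I would then define $\mu_q^* := 2 X_q^*$ for every $q \in [k]$. Since $\nabla G_q(X_q^*) = 2\lambda_q K X_q^*$, Theorem \ref{thm:convlin} gives $G_q(X_q) - G_q(X_q^*) \geq \innerpd{2\lambda_q K X_q^*}{}{X_q - X_q^*}$ for every $X_q$. Using symmetry of $K$, the right-hand side is exactly $\lambda_q \innerpd{\mu_q^*}{K}{X_q - X_q^*}$, so rearranging and summing over $q$ yields the key inequality
\[
F(X) - L(X) \;\geq\; F(X^*) - L(X^*) \qquad \text{for all } X \in \Omega,
\]
where $L(X) := \sum_{q=1}^k \lambda_q \innerpd{\mu_q^*}{K}{X_q}$.

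The two conclusions of the claim now follow formally. For the first part, since $X^* \in \argmax_{X \in \Omega} F(X)$ we have $F(X^*) \geq F(X)$, hence $L(X^*) \geq L(X)$ for all $X \in \Omega$; that is, $P^*$ is a maximizer of $L$. For the second part, if $X^{**}$ is any maximizer of $L$ over $\Omega$, then $L(X^{**}) = L(X^*)$, and the key inequality applied with $X := X^{**}$ yields $F(X^{**}) \geq F(X^*) + (L(X^{**}) - L(X^*)) = F(X^*)$, so the partition corresponding to $X^{**}$ lies in $\Omega^*$.

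There is no real obstacle here: the argument is essentially a targeted rerun of the proof preceding Corollary \ref{coro:conv-lin}. The only subtle point is recognising that the natural gradient $2\lambda_q K X_q^*$ can be absorbed into the $K$-weighted linear form $\lambda_q \innerpd{\mu_q^*}{K}{\cdot}$ via the choice $\mu_q^* = 2 X_q^*$; this sidesteps any potential issue if $K$ were singular, because the argument never needs to invert $K$.
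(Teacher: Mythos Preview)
Your proof is correct and follows the same route as the paper, which simply observes that each $G_q(X)=\lambda_q\innerpd{X}{K}{X}$ is convex by Lemma~\ref{lem:psdconv} and then invokes Corollary~\ref{coro:conv-lin}. Your version is in fact a touch more careful: by choosing $\mu_q^*=2X_q^*$ you exhibit the linearization directly in the $K$-weighted form $\lambda_q\innerpd{\mu_q^*}{K}{X_q}$ demanded by the claim, whereas the paper's black-box appeal to Corollary~\ref{coro:conv-lin} literally produces only standard linear forms $\innerpd{\nu_q^*}{}{X_q}$ and leaves the identification $\nu_q^*=\lambda_q K\mu_q^*$ (which needs $\nu_q^*\in\operatorname{range}(K)$, true here since $\nu_q^*=\nabla G_q(X_q^*)=2\lambda_q K X_q^*$) implicit.
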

The proof is deferred to the appendix.

Finally, we undo the transformation of Claim \ref{clm:lin-sep1} and revert back to $\reals^k$ in 
the following claim. Consequently, we can reformulate the optimality conditions of Claim \ref{clm:lin-sep2} as follows.
\begin{claim}\label{clm:lin-sep3} 
For every $P^* \in \Omega^*$, there exist vectors $\mu_1, \dots, \mu_k \in \reals^k$ such that 
$P^*$ is an optimal solution for the objective function
\[
\mathcal{L}_{\mu_1,\dots,\mu_p} =  \max_{P \in \mathcal{P}} \displaystyle\sum_{q=1}^{p} \innerpd{\mu_q}{}{W^{T_q}} 
\]
Moreover, the set of optimal solutions of $\mathcal{L}_{\mu_1,\dots,\mu_p}$ is a subset of $\Omega^*$.
\end{claim}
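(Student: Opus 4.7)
The plan is to treat Claim \ref{clm:lin-sep3} as a pure bookkeeping transformation of Claim \ref{clm:lin-sep2}: the two linear objectives will turn out to be literally the same quantity, just rewritten with the summations re-indexed so that the role of ``collection-of-cluster-sums'' is played by $W^{T_q}\in\reals^k$ rather than by the coordinate slices $X_q\in\reals^p$.

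First, I would invoke Claim \ref{clm:lin-sep2} on the given $P^*\in\Omega^*$ to obtain vectors $\mu_1^*,\dots,\mu_k^*\in\reals^p$ such that $P^*$ maximizes $\sum_{q=1}^k \lambda_q\,\innerpd{\mu_q^*}{K}{X_q}$ over $\mathcal{P}$, and moreover the full argmax set of that linear objective is contained in $\Omega^*$. Using the definition $(X_q)_m = (W^{T_m})_q$ and expanding the $K$-inner product coordinate-wise,
\[
\sum_{q=1}^k \lambda_q\,\innerpd{\mu_q^*}{K}{X_q}
= \sum_{q=1}^k \lambda_q \sum_{l,m\in[p]} (\mu_q^*)_l\, K_{lm}\, (W^{T_m})_q.
\]
Swapping the order of summation and pulling everything indexed by $m$ to the outside gives
\[
\sum_{m=1}^p \sum_{q=1}^k \lambda_q\,(K\mu_q^*)_m\,(W^{T_m})_q,
\]
where I have used that $K=K^T$ so that $\sum_l (\mu_q^*)_l K_{lm} = (K\mu_q^*)_m$.

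Next I would introduce, for each $m\in[p]$, the vector $\mu_m\in\reals^k$ defined coordinate-wise by $(\mu_m)_q := \lambda_q\,(K\mu_q^*)_m$. By construction,
\[
\sum_{q=1}^k \lambda_q\,(K\mu_q^*)_m\,(W^{T_m})_q \;=\; \innerpd{\mu_m}{}{W^{T_m}},
\]
so the entire objective of Claim \ref{clm:lin-sep2} is identical, as a function of $P$, to $\sum_{m=1}^p \innerpd{\mu_m}{}{W^{T_m}} = \mathcal{L}_{\mu_1,\dots,\mu_p}(P)$. Since the two functions of $P\in\mathcal{P}$ coincide pointwise, they have the same argmax set; in particular $P^*$ maximizes $\mathcal{L}_{\mu_1,\dots,\mu_p}$, and the ``moreover'' clause transfers verbatim from Claim \ref{clm:lin-sep2}.

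The only real obstacle is the potential index confusion between $\reals^p$ (where the vectors $\mu_q^*$ from Claim \ref{clm:lin-sep2} live and are indexed by $q\in[k]$) and $\reals^k$ (where the new $\mu_m$ live and are indexed by $m\in[p]$), together with the fact that the displayed statement of Claim \ref{clm:lin-sep3} has slightly inconsistent index ranges. Keeping a clear naming convention, and relying on the symmetry of $K$ to identify $(K\mu_q^*)_m$ with $\sum_l(\mu_q^*)_l K_{lm}$, suffices to carry this through; no further convexity or optimization content is required beyond what Claim \ref{clm:lin-sep2} already delivers.
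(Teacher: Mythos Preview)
Your proposal is correct and follows essentially the same approach as the paper: both start from the linear objective of Claim~\ref{clm:lin-sep2}, expand the $K$-inner product, swap the $q\in[k]$ and $m\in[p]$ summations, and absorb the resulting coefficients into new vectors $\mu_m\in\reals^k$. The only cosmetic difference is that the paper routes the computation through the intermediate expression $\sum_{l,m}K_{lm}\innerpd{W^{T_l^*}}{\Lambda}{W^{T_m}}$ (taking $\mu_q^*=X_q^*$), whereas you go directly to $(\mu_m)_q=\lambda_q(K\mu_q^*)_m$; the argmax identification and the ``moreover'' clause transfer identically in both versions.
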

The proof is deferred to the appendix.

We finish with the proof of Lemma \ref{lem:lin-sep}. 

\begin{proofof}{Lemma \ref{lem:lin-sep}}
Since $P=(T_1,\dots,T_p) \in \Omega^*$ is an optimal partition for a \qvp instance $W$,
by Claim \ref{clm:lin-sep3}, there exist vectors $\mu_1, \dots, \mu_k \in \reals^k$ such that 
$P^*$ is an optimal solution for the objective function
\[
\mathcal{L}_{\mu_1,\dots,\mu_p} =  \max_{P \in \mathcal{P}} \displaystyle\sum_{q=1}^{p} \innerpd{\mu_q}{}{W^{T_q}}. 
\]
Recall the notation $W[T_q] = \{ w_i \,\vert\, i \in T_q \}$. Suppose
there exist $q,r$ such that $W[T_q]$ and $W[T_r]$ are not (weakly) linearly
separated. We claim that this is a contradiction.  Indeed, we can
isolate the terms $\innerpd{\mu_q}{}{W^{T_q}} + \innerpd{\mu_q}{}{W^{T_r}}$ and rewrite
them as $\innerpd{(\mu_q - \mu_r)}{}{W^{T_q}} + \innerpd{\mu_r}{}{(W^{T_q} +
  W^{T_r} )}$.  Now we (weakly) linearly separate the set $W[T_q] \cup
W[T_r]$ along the direction $(\mu_q - \mu_r)$, that is, we choose a
partition $T_q'\cup T_r'$ of $T_q\cup T_r$ such that $T_q'=\{j\in
T_q\cup T_p\mid \innerpd{\mu_q-\mu_r}{}{w_j}\geq0\}$. Then
$\innerpd{(\mu_q - \mu_r)}{}{W^{T'_q}}>\innerpd{(\mu_q -
  \mu_r)}{}{W^{T_q}}$, because $T_q$ and $T_r$ are not (weakly) linearly
separated by $\mu_q-\mu_r$, and $\innerpd{\mu_r}{}{(W^{T'_q} +
  W^{T'_r} )}=\innerpd{\mu_r}{}{(W^{T_q} +
  W^{T_r} )}$, because $T_q'\cup T_r'=T_q\cup T_r$. Hence
$\innerpd{\mu_q}{}{W^{T'_q}} +
\innerpd{\mu_q}{}{W^{T'_r}}>\innerpd{\mu_q}{}{W^{T_q}} +
\innerpd{\mu_q}{}{W^{T_r}}$, which contradicts the maximality of $P^*=(T_1,\ldots,T_p)$.
Therefore, it must be the case that
the sets $T_q$ and $T_r$ are already (weakly) linearly separated along
$(\mu_q - \mu_r)$.
\end{proofof}

\subsection{Algorithm for \qvp}\label{subsec:algo}


In this subsection, we describe a \rtime algorithm for \qvp. Along with the reduction 
 stated in Lemma \ref{lem:reduce}, this finishes the proof of Theorem \ref{thm:main}.

We proceed with an informal description of the algorithm. 
 Recall that a \qvp instance is $(W,K,\Lambda)$ where $W= \{w_1,\dots,w_n\} \subset \reals^k$. 
 The output is an ordered partition $(T_1,\dots,T_p)$ of $[n]$ satisfying $|T_i| = n_i$,
 for some fixed $n_1,\dots,n_p$. Our strategy is simple: we enumerate all partitions
 $(T_1,\dots,T_p)$ of $[n]$ such that the sets $W[T_i],W[T_j]$ are weakly linearly separated 
 for every $i,j \in [p]$. By Lemma \ref{lem:lin-sep}, this suffices to obtain an optimal partition. 
 We briefly describe our algorithm.  We first guess the ${p \choose 2}$ separating hyperplanes $H_{ij}$, $i,j \in [p]$, where
 $H_{ij}$ weakly linearly separates $W[T_i]$ and $W[T_j]$. 
 Let $\mathcal{H}$ be the set of ${n \choose k}$  hyperplanes defined by $k$-subsets of $W$.
 It is sufficient to pick $H_{ij}$ from the set $\mathcal{H}$, 
 since a hyperplane in $\reals^k$ can be equivalently replaced by a hyperplane in $\mathcal{H}$,
 without changing the underlying (weakly) linear separation. These hyperplanes partition 
 $\reals^k$ into convex regions. For every $w_i \in W$, we check its relative position with respect to these hyperplanes.
 We assign $w_i$ to one of the sets $T_1,\dots,T_p$, depending of its relative position.
  We claim 
 that every weakly linearly separated family of sets $W[T_1],\dots,W[T_p]$ can be discovered on some branch of our 
 computation. The choice of $p^2$ hyperplanes
 implies a ${n \choose k}^{p^2}$ branching. 
 Therefore, the overall branching factor is $n^{\mathcal{O}(kp^2)}$. Algorithm \ref{alg:main}
 gives a formal description of our algorithm.

There are two caveats. First, we also pick an orientation $\sigma_{ij} \in \{+1,-1\}$ for every hyperplane $H_{ij}$. 
 The $+1$ orientation indicates that $T_i \subset H^{+} \cup H,\, T_j \subset H^{-} \cup H$ (and vice-versa).
Second, there may exist some points which lie on the hyperplanes, and hence, their assignments cannot 
be determined by their relative positions to these hyperplanes. To handle this degeneracy, 
we use the Simulation-of-Simplicity technique and assume general position.
Therefore, there are at most $p^2 \cdot k$ such ambigious points. Since this is a bounded number,
we can brute-force try all $p$ possible sets $T_1,\dots,T_p$ for such points.
This leads to a branching factor of $p^{p^2k}$. The overall branching factor is still $n^{\mathcal{O}(kp^2)}$. We now proceed to give 
a formal description as Algorithm \ref{alg:main}.

\begin{figure*}[t]
\begin{framed}\label{alg:main}
Algorithm \ref{alg:main}

\bigskip
\hrule
\bigskip

\emph{Input:} $W = \{w_1,\dots,w_n\} \subset \reals^k$, matrices $K$, $\Lambda$. \\
\emph{Output:} A partition $T_1,\dots,T_p$ of $W$ where $|T_i| = n_i$ for some fixed $n_1,\dots,n_p$. 
\begin{enumerate}
\item For every choice of ${p \choose 2}$ hyperplanes $H_{ij}$, $i\in [p],j\in[p]$ from the 
set $\mathcal{H}$ with an orientation $\sigma_{ij}\in\{+1,-1\}$,
\begin{enumerate}
\item Let $W' = \emptyset$.
\item For every $w_i\in W$ and $q \in [p]$, check if $w_i$ belongs to the convex region $R_q$
corresponding to the intersection of open halfspaces 
\[
R_q = \bigcap_{i=1, i\neq q}^p H_{qi}^{\sigma_{qi}}
\]
We use the Simulation-of-Simplicity subroutine of Section \ref{subsec:sos} to check 
the relative position of $w_i$ with respect to the hyperplanes. 
\item If $w_i$ belongs to some region $R_q$, we assign $w_i$ to the set $T_q$.
Otherwise, we add $w_i$ to the set $W'$. 
\item For every point $w_i \in W'$, try each of the $p$ assignments to $T_1,\dots,T_p$.
\item Check if the constraints $|T_i| = n_i$ are satisfied, otherwise reject this branch of computation. 
\end{enumerate}
\item For every partition $(T_1,\dots,T_p)$ computed above, evaluate the \qvp objective 
function and output an optimal solution. 
\end{enumerate}
\end{framed}
\end{figure*}

\begin{claim}\label{clm:rtime}
Given a \qvp instance, Algorithm \ref{alg:main} correctly computes an optimal solution in \rtime time. 
\end{claim}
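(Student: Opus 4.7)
The plan is to prove correctness and then bound the running time, with correctness resting on Lemma~\ref{lem:lin-sep} and the standard fact that any weakly separating hyperplane in $\reals^k$ can be replaced by one passing through $k$ of the given points.

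For correctness, fix an optimal partition $P^* = (T_1^*,\ldots,T_p^*) \in \Omega^*$. By Lemma~\ref{lem:lin-sep}, for every pair $i,j \in [p]$ the sets $W[T_i^*]$ and $W[T_j^*]$ are weakly linearly separated along some hyperplane $\widetilde H_{ij}$ with some orientation $\sigma_{ij}\in\{+1,-1\}$. The first step is to show that we may assume $\widetilde H_{ij}$ passes through $k$ points of $W$: starting from any weakly separating hyperplane, translate and rotate it while maintaining the separation until it contains a maximal independent set of points from $W[T_i^*]\cup W[T_j^*]$; a standard perturbation/dimension argument gives such a hyperplane lying in $\mathcal{H}$ with the same orientation. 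Thus for some choice of hyperplanes $\{H_{ij}\}\subseteq\mathcal{H}$ and orientations $\{\sigma_{ij}\}$, each $T_q^*$ is contained in the closed region $\overline{R_q}=\bigcap_{i\neq q}(H_{qi}^{\sigma_{qi}}\cup H_{qi})$; conversely every $w\in R_q$ (the open version) must lie in $T_q^*$. Algorithm~\ref{alg:main} tries precisely this tuple on one of its branches, assigns every $w_i$ lying in some $R_q$ to $T_q$, and the only uncertainty is for points on the boundary.

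Handling the boundary is the main conceptual obstacle, and it is where Simulation-of-Simplicity comes in. By Lemma~\ref{lem:sos} we treat the input as if it were in general position; consequently each hyperplane in $\mathcal{H}$ is determined by exactly $k$ canonically perturbed points, and an additional point lies on this hyperplane only if it is one of these $k$. Hence the set $W'$ of ``ambiguous'' points has cardinality at most $\binom{p}{2}\cdot k = O(kp^2)$. The algorithm brute-forces all $p^{|W'|}=p^{O(kp^2)}$ assignments of $W'$ to the parts; in particular the assignment matching $P^*$ is enumerated. Finally the constraint check $|T_i|=n_i$ discards infeasible branches, and among the feasible enumerated partitions the one with the largest value of $F$ is output. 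Since $P^*$ itself is enumerated and feasible, the output value is at least $F(P^*)$, and since every enumerated partition is feasible it is at most $F(P^*)$; hence equality holds.

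For the running time, the outer enumeration ranges over $|\mathcal{H}|^{\binom{p}{2}} \cdot 2^{\binom{p}{2}} = \binom{n}{k}^{O(p^2)} = n^{O(kp^2)}$ tuples. Inside each branch, the classification of each point uses the SoS subroutine of Lemma~\ref{lem:sos} in time $O(k^k)$ per hyperplane; doing this for all $n$ points and all $O(p^2)$ hyperplanes costs $\mathrm{poly}(n,k^k,p)$. The boundary brute-force contributes a factor $p^{O(kp^2)}$, which is absorbed into $n^{O(kp^2)}$ (and into the $\bigOstar$ notation if $p$ is polynomial in $n$). Evaluating the \qvp objective for each candidate partition is a polynomial-time operation in $n,p,k$. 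Multiplying the branching and per-branch costs yields the claimed $\bigOstar(n^{O(kp^2)})$ bound. The subtlety worth double-checking is exactly the SoS argument for the boundary: one must verify that the canonical perturbation is consistent across different invocations so that the partition realizing $P^*$ is not silently perturbed away—but this is precisely what Lemma~\ref{lem:sos} guarantees, since we use the same lexicographic ordering throughout.
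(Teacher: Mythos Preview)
Your proposal is correct and follows essentially the same route as the paper: invoke Lemma~\ref{lem:lin-sep} to get weak linear separation of an optimal partition, replace each separating hyperplane by one in $\mathcal H$ (the paper delegates this to Claim~\ref{clm:hyperplanes} while you sketch the rotate-and-translate argument directly), use Simulation-of-Simplicity to bound $|W'|\le O(kp^2)$, brute-force the boundary assignments, and multiply out the branching factors. Your explicit observation that any $w\in R_q$ is forced into $T_q^*$, together with the matching upper bound from feasibility, is a slightly cleaner way to phrase the correctness conclusion than the paper's, but the content is the same.
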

\begin{proof}
We first show the correctness. By Lemma \ref{lem:lin-sep}, it suffices 
to show that Algorithm \ref{alg:main} computes all partitions $(T_1,\dots,T_p)$ of $[n]$
such that the family of sets $W[T_1],\dots,W[T_p]$ is weakly linearly separated. 
We claim that Algorithm \ref{alg:main} discovers every such family of sets in Step 1.
Indeed, for such a family $W[T_1],\dots,W[T_p]$, there exist $p^2$ hyperplanes $H_{ij}$ which weakly linearly separate
the sets $W[T_i],W[T_j]$, for $i,j \in [p]$. By S-o-S technique of Section \ref{subsec:sos}, we can assume general position
for the input set $W$. It can be shown that for every hyperplane $H_{ij}$,
we can equivalently find another hyperplane $\tilde{H_{ij}}$ 
in $\mathcal{H}$ with the following property. If $(A,B)$ is a partition of $[n]$ such that 
$H_{ij}$ weakly linearly separates $W[A],W[B]$, then $\tilde{H_{ij}} \in \mathcal{H}$ also weakly linearly separates $W[A],W[B]$. 
(Refer to Claim \ref{clm:hyperplanes} in the appendix). 
Therefore, there exists a branch of the algorithm in Step 1 such that we discover the hyperplanes $\tilde{H_{ij}}$.
Steps 1 (b)-(d) ensure that we recover the partition $W[T_1],\dots,W[T_p]$. 
%

The running time can be bounded as follows. The branching in Step 1 is bounded by ${n \choose k}^{p^2} \cdot 2^{p^2}$. 
In Step 1 (b), the number of calls to the Simulation of Simplicity subroutine is bounded by $n \cdot p \cdot p$, 
since we have $n$ points, $p$ regions and $p$ queries $(H_{q1},w_p),\dots,(H_{qp},w_p)$. 
By Lemma \ref{lem:sos}, every call to this subroutine has a cost $\bigOstar(k^{\mathcal{O}(k)})$ 
In Step 1 (c), there is an additional branching factor of $p^{|W'|}$ for brute-force assignment of points in $W'$. 
These are precisely the points which lie on some hyperplane $H_{ij}$, and hence $|W'| \leq p^2 \cdot k$. 
This incurs an additonal $p^{p^2k}$ branching. The remaining steps are usual polynomial time computations.
The overall running time is thus bounded by ${n \choose k}^{p^2} \cdot 2^{p^2} \cdot p^{p^2k} \cdot \bigOstar(k^{\mathcal{O}(k)})$ $\leq$ \rtime.

\end{proof}

Finally, we summarize this section with the proof of our main theorem. 

\begin{proofof}{Theorem \ref{thm:main}}
Lemma \ref{lem:reduce} and Claim \ref{clm:rtime} together imply the proof.
\end{proofof}

\section{Conclusion} 

Through our results, we were able to gain insight into the tractibility of the problems $\SIM$ and $\MSIM$. 
However, there are a few open threads which remain elusive. 
The regime of bounded rank $k$ and unbounded clustering number $p$ is 
still not fully understood for $\MSIM$, in the case of positive semi-definite matrices.
It is not clear whether the problem is $\textsf{P}$-time or \NP-hard in this case. 
Indeed, an $n^{O(k)}$ algorithm for $\MSIM$, in the case of positive semi-definite matrices, remains a possibility.
From the perspective of parameterized complexity, we can ask 
if $\MSIM$ is \textsf{W[1]}-hard, where the parameter of interest is 
the rank $k$. Finally, the approximability for the problems $\MSIM$
deserves further examination, especially for the case of bounded rank.

\appendix
\section{Appendix}

\section*{Proofs in Section \ref{sec:hard}}

\begin{proofof}{Theorem \ref{thm:hard:graphs}}
The proof is done by reduction from the \NP-hard Hamiltonian
Cycle problem in $3$-regular graphs (\textsc{Ham-Cycle});
see \cite{GaJo1979}.
Given a $3$-regular graph $G$ on $n$ vertices as an instance
of \textsc{Ham-cycle}, the reduction computes the $n$-vertex
cycle $C_n$ and graph $G$ as inputs for $\SIM$.
We recall from Section \ref{sec:prelims} that the squared
Frobenius distance $\|A_{C_n}^{\pi}-A_G\|_F^2$ between these
two graphs equals twice the number of $\pi$-mismatches.
Since $C_n$ and $G$ have $n$ and $\frac{3n}{2}$ edges,
respectively, there are at least $\frac{3n}{2}-n=\frac{n}{2}$
mismatches for any $\pi\in S_n$.
We claim that $G$ has a Hamiltonian cycle if and only if there
exists a $\pi$ for which the number of $\pi$-mismatches is
exactly $\frac{n}{2}$.
Indeed, if $G$ has a Hamiltonian cycle, the natural bijection
$\pi:V(C_n) \rightarrow V(G)$ will cause exactly $\frac{n}{2}$
mismatches.
Conversely, if there exists a $\pi$ for which the number of
mismatches is exactly $\frac{3n}{2}-n$, it must map every edge
of $C$ onto an edge of $G$.
Hence, $G$ has a Hamiltonian cycle.
\end{proofof}

\begin{proofof}{Theorem \ref{thm:hard:trees}}
The proof is by a reduction from the following \NP-hard variant
of the \textsc{Three-Partition} problem \cite{GaJo1979}, which is defined
as follows.
The input consists of integers $A$ and $a_1,\dots,a_{3m}$ in
unary representation, with $\sum_{i=1}^{3m}a_i=mA$ and with
$A/4<a_i<A/2$ for $1\le i\le3m$.
The question is to decide whether $a_1,\dots,a_{3m}$ can
be partitioned into $m$ triples so that the elements in
each triple sum up to precisely $A$.

We first show that the restriction of $\SIM$ to forests is \NP-hard.
Given an instance of \textsc{Three-Partition}, we compute an instance
of $\SIM$ on the following two forests $F_1$ and $F_2$.  
Forest $F_1$ is the disjoint union of $3m$ paths with
$a_1,\dots,a_{3m}$ vertices, respectively.
Forest $F_2$ is the disjoint union of $m$ paths that each
consists of $A$ vertices.
We claim that the \textsc{Three-Partition} instance has answer YES,
if and only if there exists a permutation $\pi$ such that 
there are at most $2m$ mismatches. 
If the desired partition exists, then for each triple we
we can pack the three corresponding paths in $F_1$ into 
one of the paths in $F_2$ with two mismatches per triple.
Conversely, if there exists a permutation $\pi$ with at 
most $2m$ mismatches, then these $2m$ mismatches cut the 
paths in $F_2$ into $3m$ subpaths (we consider 
isolated vertices as paths of length $0$).
As each of these $3m$ subpaths must be matched with a
path in $F_1$, we easily deduce from this a solution for 
the \textsc{Three-Partition} instance.

To show that $\SIM$ is \NP-hard for the class of trees, 
we modify the above forests $F_1$ and $F_2$ into trees 
$T_1$ and $T_2$. 
Formally, we add a new vertex $v_1$ to $V(F_1)$ and
then connect one end-point of every path in $F_1$ to 
$v_1$ by an edge; note that the degree of vertex $v_1$ 
in the resulting tree is $3m$.
Analogously, we add a new vertex $v_2$ to $V(F_2)$,
connect it to all paths, and thus produce a tree in
which vertex $v_2$ has degree $m$.
For technical reasons, we furthermore attach $8m$    
newly created leaves to every single vertex in 
$V(F_1)$ and $V(F_2)$. k
The resulting trees are denoted $T_1$ and $T_2$, respectively.

We claim that the considered \textsc{Three-Partition} instance 
has answer YES, if and only if there exists 
$\pi:V(T_1)\rightarrow V(T_2)$ with at most $4m$ mismatches.
If the desired partition exists, the natural bijection maps 
every original forest edge in $T_1$ to an original forest 
edge in $T_2$, except for some $2m$ out of the $3m$ edges 
that are incident to $v_1$ in $T_1$; this yields a total
number of $2m+2m=4m$ mismatches. 
Conversely, suppose that there exists a permutation $\pi$ 
with at most $4m$ mismatches.
Then $\pi$ must map $v_1$ in $T_1$ to $v_2$ in $T_2$, since 
otherwise we pay a penalty of more than $4m$ mismatches
alone for the edges incident to the vertex mapped into $v_2$.
As the number of mismatches for edges incident to $v_1$ and
$v_2$ amounts to $2m$, there remain at most $2m$ further
mismatches for the remaining edges.
Similarly as in our above argument for the forests, these
at most $2m$ mismatches yield a solution for the 
\textsc{Three-Partition} instance.
\end{proofof}

\begin{proofof}{Theorem \ref{thm:path-tree}}
If $G$ is a path and $H$ is a tree, $\SIM$ boils down to
the problem of finding a system of disjoint paths in the
tree $H$ that contains the maximal number of edges.
We root the tree $H=(V,E)$ at an arbitrary vertex, and for
every $v\in V$ we let $H(v)$ denote the induced maximal
sub-tree of $H$ that is rooted at $v$.
For $v\in V$, we let $A(v)$ denote the maximal number of
edges that can be covered by a system of disjoint paths
in tree $H(v)$.
Furthermore, we let $B(v)$ denote the maximal number of
edges that can be covered by a system of disjoint paths
in tree $H(v)$ subject to the condition that one of
these paths starts in vertex $v$.
For a leaf $v$ in $H$, we have $A(v)=B(v)=0$.
For non-leaves $v$ in $H$, a straightforward dynamic
programming approach computes $A(v)$ and $B(v)$ in linear
time from the corresponding $A$-values and $B$-values
for the children of $v$.
All in all, this yields a polynomial time algorithm.
\end{proofof}

\begin{proofof}{Theorem \ref{thm:hard:matrices}}
The proof is by a reduction from the \NP-hard \textsc{Partition} 
problem \cite{GaJo1979}, defined as follows. 
Given a set $S$ of $2n$ positive integers $\{a_1,\dots,a_{2n}\}$, 
where $a_1+ \dots+a_{2n} = 2A$, decide whether there exists a 
subset $I\subseteq\{1,\ldots,2n\}$ with $|I|=n$ such that 
$\sum_{i \in I } a_i = A$.   
We construct the following $2n\times 2n$ real symmetric 
matrices $C$ and $B$ as our $\MSIM$ instance.
The matrix $C$ is defined as $C_{ij} := a_i \cdot a_j$. 
The matrix $B$ is defined as 
\[
  B_{ij}  := \begin{cases}
              -1 &  i \in [1,n],~ j \in [1,n] \\
              -1 &  i \in [n+1,2n],~ j \in [n+1,2n] \\
             ~~0 & \mbox{otherwise} 
             \end{cases}
\]
Indeed, $\|C^{\pi} - B \|_F^2  = \|C^{\pi}\|_F^2 + \|B\|_F^2 - 2 \tracepd{C^{\pi}}{B}$.
Since $\|C^{\pi}\|_F^2 = \|C\|_F^2$ does not depend on $\pi$, it suffices to 
minimize the term $(-1)\tracepd{C^{\pi}}{B}$. The term
\begin{align*}
\tracepd{C^{\pi}}{B} 
  & = \displaystyle\sum_{i,j \in [2n]} c_{i^{\pi} j^{\pi}} b_{i j} \\ 
  & =  \displaystyle\sum_{i,j \in [1,n]} c_{i^{\pi} j^{\pi}} (-1) +  \displaystyle\sum_{i,j \in [n+1,2n]} c_{i^{\pi} j^{\pi}} (-1) \\
  & =  (-1) \left(\displaystyle\sum_{i,j \in [1,n]} c_{i^{\pi} j^{\pi}}  +  \displaystyle\sum_{i,j \in [n+1,2n]} c_{i^{\pi} j^{\pi}} \right) \\
  & = (-1) \left(  \left(\displaystyle\sum_{i \in [1,n]} a_{i^{\pi}} \right)^2  +  \left(\displaystyle\sum_{i \in [n+1,2n]} a_{i^{\pi}}\right)^2 \right).
 \end{align*}
Let $S_1 = \{ i^{\pi} \,\vert\, i \in [1,n] \}$. 
Let $S_2 = \{ i^{\pi} \,\vert\, i \in [n+1, 2n] \}$.  
Let $X_1,X_2$ be the sum of elements corresponding to the sets $S_1,S_2$ respectively. 
Clearly, $X_2 = 2A -X_1$. Then,
\begin{align*}
(-1) \tracepd{C^{\pi}}{B} &=  \left(\displaystyle\sum_{i \in S_1 } a_i \right)^2 + \left(\displaystyle\sum_{i \in S_2 } a_i\right)^2 \\
& = X_1^2 +  (2A - X_1)^2 \\ 
& \geq 2A^2
\end{align*}
using the inequality $\frac{x_1^2 + x_2^2}{2} \geq (\frac{x_1 + x_2}{2})^2$ for $x_1,x_2 \geq 0$.  
Moreover, equality is attained only for $x_1=x_2$, which implies $X_1 = 2A - X_1$, and hence $X_1=X_2=A$. 
Therefore, the given \textsc{Partition} instance has a partition of the desired kind if and only if there 
exists a $\pi$ such that $(-1)\tracepd{C^{\pi}}{B}$ attains the minimum value $2A^2$. 
Hence, the problem of minimizing $(-1)\tracepd{C^{\pi}}{B}$, and consequently 
$\|C^{\pi} - B \|_F^2$, over $\pi \in S_n$ must be \NP-hard. 
 
Finally, we show that $C$ and $B$ are matrices of rank $1$ and rank $2$ respectively.
The matrix $C$ can be expressed as a rank $1$ matrix $uu^T$, where $u = [ a_1 \dots a_{2n}]$ is a column vector of length $2n$.  
The corresponding Eigenvalue can be checked to be $\|u\|^2 = (a_1^2 + \dots + a_{2n}^2)$. In particular, $A$ is positive semi-definite.
The matrix $B$ can be expressed as the sum of two rank-$1$ matrices $B_1+B_2$ where (a) $B_1 = (-1) v_1v_2^T$, $B_2 = (-1) v_2v_2^T$  
and (b) $v_1$ is a $0$-$1$ column vector of length $2n$ such that the $i^{th}$ coordinate of $v_1$ is $1$ iff $1\leq i\leq n$. 
Similarily, $v_2$ is a $0$-$1$ column vector of length $2n$ such that the $i^{th}$ coordinate of $v_2$ is $1$ iff $n+1\leq i\leq 2n$. 
The corresponding Eigenvalues can be checked to be $-\|v_1\|^2,-\|v_2\|^2$ which is the multiset $\{-n,-n \}$.
\end{proofof}

\begin{proofof}{Theorem \ref{thm:hard:psd}}
 In the proof of Theorem \ref{thm:hard:graphs}, instead of considering the adjacency matrices 
 of $C_n$ and $G$, we consider their Laplacian matrices $L_{C_n}$ and $L_G$. 
 Since $L_{C_n} = D_{C_n} - A_{C_n}$ and $C_n$ is $2$-regular, $L_{C_n} = 2 I_n - A_{C_n}$. 
 Since $L_{G} = D_{G} - A_{G}$ and $G$ is $3$-regular, $L_{G} = 3 I_n - A_{G}$. 
 Therefore, the quantity 
 \begin{align*}
 \|L_{C_n}^{\pi} - L_G\|_F^2  & =  \|(2 I_n - A_{C_n})^{\pi} - (3 I_n - A_{G}) \|_F^2 \\
 & =  \|2 I_n - A_{C_n}^{\pi} - (3 I_n - A_{G}) \|_F^2 \\
 & =  \|- I_n - (A_{C_n}^{\pi} - A_{G}) \|_F^2 \\
 & =  \| (A_{C_n}^{\pi} - A_{G}) + I_n\|_F^2 \\
 & =  \| (A_{C_n}^{\pi} - A_{G})\|_F^2 + \| I_n\|_F^2 \\
 & =  \| (A_{C_n}^{\pi} - A_{G})\|_F^2 + n.
 \end{align*}
The second last equality follows because $I_n$ has only diagonal entries whereas every diagonal entry of $(A_{C_n}^{\pi} - A_{G})$ is zero. 
The above calculation shows that these two quantites differ by $n$ (which is independent of $\pi$). Therefore, computing the 
Frobenius distance between the two Laplacian matrices $L_{C_n}$ and $L_{G}$ is \NP-hard as well. 
\end{proofof}

\begin{proofof}{Theorem \ref{thm:hard:psd:bnd}}
 We modify the hardness proof for Theorem \ref{thm:hard:matrices}. We define the matrix $C$ to be 
 the same as in the proof of Theorem \ref{thm:hard:matrices}. We define the matrix $B':=B+nI_n$
 where $B$ is the matrix from the proof of Theorem \ref{thm:hard:matrices}. 
 Since the Eigenvalues were shown to be are $-n$ with multiplicity $2$ and $0$ with multiplicity $n-2$,
 adding the matrix $nI_n$ to $B$ shifts the Eigenvalues by $+n$, and hence $B'$ is p.s.d. It remains to observe that 
 the significant quantity $\tracepd{C^{\pi}}{B'}$ differs from the corresponding $\tracepd{C^{\pi}}{B'}$ by a constant independent of $\pi$.
 Indeed,
 \begin{align*}
 \tracepd{C^{\pi}}{B'} &= \displaystyle\sum_{i,j \in [2n]} c_{i^{\pi} j^{\pi}} b'_{i j} \\ 
		       &= \displaystyle\sum_{i,j \in [2n]} c_{i^{\pi} j^{\pi}} b_{i j} + \displaystyle\sum_{i \in [2n]} c_{i^{\pi} i^{\pi}} n \\ 
 &= \displaystyle\sum_{i,j \in [2n]} c_{i^{\pi} j^{\pi}} b_{i j} + \displaystyle\sum_{i \in [2n]} a_i^2 n \\ 
&= \tracepd{C^{\pi}}{B} + n \left(\displaystyle\sum_{i \in [2n]} a_i^2\right). \\ 
 \end{align*}
 Hence, the problem of minimizing $\|C^{\pi} - B' \|_F^2$ over $\pi \in S_n$ must be \NP-hard. 
 Recall that the matrix $C$ was shown to be positive semi-definite in the proof of Theorem \ref{thm:hard:matrices}.
 This finishes the proof of our theorem.
\end{proofof}

\section*{Proofs in Section \ref{sec:algo}}

\begin{proofof}{Lemma \ref{lem:reduce}}
The spectral 
decompositions of $A$ and $B$ are represented by $\specA$ and $\specB$. 
Since the cluster-number of $B$ is $p$, let $\tilde{V} = \{\tilde{V}^1,\dots,\tilde{V}^p\}$
be the set of distinct vectors in the multiset $\rows(V)$.
Let $n_1,\dots,n_p$ be the multiplicity of the  
elements $\tilde{V}^1,\dots,\tilde{V}^p$ respectively. Clearly, $n_1+ \dots + n_p = n$.
Let $\tilde{P}$ be the natural partition arising from this clustering. In other words, 
$\tilde{P} = S_1 \cup \dots \cup S_p$ be a partition of $[n]$ where $S_l = \{ i \,\vert\, V^{i} = \tilde{V}^l\}, l \in [p]$.

 Let $\Pi^*$ denote the set $\argmin_{\pi}  \frob{A^{\pi} - B}$. We first restate $\Pi^*$ as follows. 
Observe that $\frob{A^{\pi} - B }^2 = \tracepd{A^{\pi}-B}{A^{\pi}-B} = \tracepd{A^{\pi}}{A^{\pi}} + \tracepd{B}{B} - 2 \tracepd{A^{\pi}}{B}$. 
Since $\tracepd{A^{\pi}}{A^{\pi}} = \frob{A^{\pi}}^2 = \frob{A}^2 = \tracepd{A}{A}$, we have $\frob{A^{\pi}-B}^2 = \tracepd{A}{A} + \tracepd{B}{B} - 2 \tracepd{A^{\pi}}{B}$.
Therefore, we can equivalently maximize $\tracepd{A^{\pi}}{B}$ over $\pi \in \Sn$. We have 
\begin{align*}
 \Pi^* & = \argmin_{\pi}  \frob{A^{\pi} - B} \\ 
       & = \argmax_{\pi}  \tracepd{A^{\pi}}{B} \\ 
       & = \argmax_{\pi}  \displaystyle\sum_{i,j \in [n]} a_{i^{\pi}j^{\pi}} b_{ij} \\ 
       & =  \argmax_{\pi}  \displaystyle\sum_{i,j \in [n]} \innerpd{U^{i^{\pi}}}{\Lambda}{U^{j^{\pi}}} \,\cdot\,  \innerpd{V^{i}}{\Gamma}{V^{j}}. 
\end{align*}
 
Restating $\Pi^*$ further, we get 
\begin{align*}
\Pi^* &= \argmax_{\pi} \displaystyle\sum_{i,j \in [n]} \innerpd{U^{i^{\pi}}}{\Lambda}{U^{j^{\pi}}} \,\cdot\,  \innerpd{V^{i}}{\Gamma}{V^{j}} \\
      &= \argmax_{\pi} \displaystyle\sum_{l,m \in [p]} \left( \displaystyle\sum_{i\in S_l,\, j \in S_m} \innerpd{U^{i^{\pi}}}{\Lambda}{U^{j^{\pi}}} \,\cdot\,  \innerpd{V^{i}}{\Gamma}{V^{j}}\right)  \\
      &= \argmax_{\pi} \displaystyle\sum_{l,m \in [p]} \left( \displaystyle\sum_{i\in S_l,\, j \in S_m} \innerpd{U^{i^{\pi}}}{\Lambda}{U^{j^{\pi}}} \right)\,\cdot\,  \innerpd{\tilde{V}^l}{\Gamma}{\tilde{V}^m} \\
      &= \argmax_{\pi} \displaystyle\sum_{l,m \in [p]} \left\langle \sum_{i\in S_l} U^{i^{\pi}}, \sum_{j\in S_m} U^{j^{\pi}}  \right\rangle_{\Lambda}  \,\cdot\,  \innerpd{\tilde{V}^l}{\Gamma}{\tilde{V}^m} \\
      &= \argmax_{\pi} \displaystyle\sum_{l,m \in [p]} \left\langle U^{S_l^{\pi}},U^{S_m^{\pi}}\right\rangle_{\Lambda} \,\cdot\,  \innerpd{\tilde{V}^l}{\Gamma}{\tilde{V}^m} 
\end{align*}
where we recall the notation $U^S = \sum_{i\in S} U^i$, $S \subseteq [n]$.  
Let $K$ be the $p\times p$ matrix defined as $K_{lm} = \innerpd{\tilde{V}^l}{\Gamma}{\tilde{V}^m}$. Clearly, $K$ is positive semi-definite.
Simplifying, we obtain 
\[
 \Pi^* = \argmax_{\pi \in \Sn} \displaystyle\sum_{l,m \in [p]} K_{lm} \left\langle  U^{S_l^{\pi}},U^{S_m^{\pi}}\right\rangle_{\Lambda}.
\]
Given a permutation $\pi \in \Sn$, we can bijectively associate a partition $P_{\pi} = (S_1^{\pi} \cup \dots \cup S_p^{\pi}) \in \mathcal{P}$. 
Recall that $(S_1,\dots,S_p)$ is the partition corresponding to the clustering of $\rows(V)$. 
Therefore, the set $\Pi^*$ is in one-to-one correspondence with the set 
\[
 \Omega^*= \argmax_{P \in \mathcal{P}} \displaystyle\sum_{l,m \in [p]} K_{lm} \left\langle U^{T_l},U^{T_m}\right\rangle_{\Lambda}.
\]
Clearly, this is an instance of \qvp with the input set $\rows(U) \subseteq \reals^k$
of size $n$, along with the corresponding matrices $K$ and $\Lambda$. This instance can be computed directly from the spectral decompositions of $A$ and $B$, 
which can be done in $\bigOstar(1)$ time. Moreover, given an optimal solution for this \qvp instance, we can uniquely recover an optimal permutation $\pi \in \Pi^*$  in $\bigO(1)$ time. Hence, proved. 

\end{proofof}

\begin{proofof}{Claim \ref{clm:lin-sep1}}
\begin{align*}
\Omega^* &= \argmax_{P \in \mathcal{P}} \displaystyle\sum_{l,m \in [p]} K_{lm} \left\langle W^{T_l},W^{T_m}\right\rangle_{\Lambda} \\ 
& =  \argmax_{P \in \mathcal{P}} \displaystyle\sum_{l,m \in [p]} K_{lm}  \left( \displaystyle\sum_{q=1}^{k}  \lambda_q (W^{T_l})_q (W^{T_m})_q \right) \\
				  &= \argmax_{P \in \mathcal{P}}  \displaystyle\sum_{q=1}^{k} \lambda_q \left(  \displaystyle\sum_{l,m \in [p]}  (W^{T_l})_q \cdot K_{lm} \cdot (W^{T_m})_q \right) \\
				  & = \argmax_{P \in \mathcal{P}}  \displaystyle\sum_{q=1}^{k} \lambda_q \innerpd{X_q}{K}{X_q}.  				 
\end{align*}
\end{proofof}

\begin{proofof}{Claim \ref{clm:lin-sep2}}
 Let $F = \displaystyle\sum_{q=1}^{k} \lambda_q \innerpd{X_q}{K}{X_q}$ denote the objective function of Claim \ref{clm:lin-sep1}. Let $G_q(X)$ denote the function 
 $\lambda_q \innerpd{X}{K}{X}$. Since $\lambda_i>0$,
 Lemma \ref{lem:psdconv} implies that $G_q$ is a convex function for $q\in [p]$. Applying Corollary \ref{coro:conv-lin} for $G_1,\dots,G_k$ finishes the proof.
\end{proofof}

\begin{proofof}{Claim \ref{clm:lin-sep3}}
  \begin{align*}
\displaystyle\sum_{q=1}^{k} \lambda_q \innerpd{X^*_q}{K}{X_q} &= \displaystyle\sum_{q=1}^{k} \lambda_q  \left( \displaystyle\sum_{l,m=1}^{p} (X^*_q)_l \,K_{lm}\,(X_q)_m \right) \\
                                                     &= \displaystyle\sum_{l,m=1}^{p}  K_{lm} \left(\displaystyle\sum_{q=1}^{k} \lambda_q (X^*_q)_l (X_q)_m \right)\\
						     &= \displaystyle\sum_{l,m=1}^{p} K_{lm} \innerpd{W^{T_l^*}}{\Lambda}{W^{T_m}} \\
						     & = \displaystyle\sum_{l=1}^{p} \innerpd{\mu_l}{}{W^{T_m}}.
\end{align*}
for some vectors $\mu_1,\dots,\mu_p \in \reals^k$.
\end{proofof}

\begin{claim}\label{clm:hyperplanes}
Let $W$ be a set of $n$ points $\{w_1,\dots,w_n\} \subset \reals^k$ in general position,
 where $n>k$.
Suppose $W_1,W_2$ is a weakly linear separation of $W$ by a hyperplane $H$. 
Then, there exists another hyperplane $\tilde{H}$ with the following properties:
(a) $\tilde{H}$ passes through exactly $k$ points of $W$, and (b) $\tilde{H}$ also weakly linearly separates $W_1,W_2$. 
\end{claim}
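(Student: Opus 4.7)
The plan is to re-encode the set of hyperplanes that weakly separate $W_1,W_2$ as a pointed polyhedral cone and then extract a witness hyperplane via extreme-ray decomposition. Parameterize each oriented affine hyperplane by a pair $(c,\alpha)\in\reals^{k+1}$ through $H_{c,\alpha}:=\{x\in\reals^k:\innerpd{c}{}{x}=\alpha\}$, with positive side $\innerpd{c}{}{x}>\alpha$. After possibly flipping the sign of $(c,\alpha)$, we may assume the given $H$ corresponds to a pair with $W_1\subseteq H^+\cup H$ and $W_2\subseteq H^-\cup H$. This naturally defines the polyhedral cone
\[
C:=\bigl\{(c,\alpha)\in\reals^{k+1}\,:\,\innerpd{c}{}{w}\geq\alpha\ \forall\,w\in W_1,\ \innerpd{c}{}{w}\leq\alpha\ \forall\,w\in W_2\bigr\},
\]
which contains the nonzero element representing $H$.

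Next I would show that $C$ is pointed. If both $(c,\alpha)$ and $(-c,-\alpha)$ lie in $C$, then every defining inequality holds with equality, so every one of the $n$ points of $W=W_1\cup W_2$ lies on the affine hyperplane $H_{c,\alpha}$; by general position, no more than $k$ points of $W$ can lie on a common hyperplane, and $n>k$, so $(c,\alpha)=0$. The same reasoning shows that at any nonzero $(c,\alpha)\in C$ at most $k$ defining inequalities can be simultaneously tight, and that whenever exactly $k$ of them are tight the corresponding $k$ points of $W$ are affinely independent (so the associated rows $(w,-1)\in\reals^{k+1}$ of the constraint matrix are linearly independent).

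The conclusion is then a direct invocation of Minkowski--Weyl: a pointed polyhedral cone is the conic hull of its extreme rays, so we may write the given $(c,\alpha)$ as a nonnegative combination $\sum_i\lambda_i(c_i,\alpha_i)$ of extreme-ray generators. Because $c\neq 0$, some generator $(c_i,\alpha_i)$ has $c_i\neq 0$, and thus defines a genuine hyperplane $\tilde H:=H_{c_i,\alpha_i}$. Being a $1$-dimensional face of a polyhedral cone in $\reals^{k+1}$, this extreme ray admits a system of $k$ linearly independent tight inequalities, which means $\tilde H$ passes through at least $k$ affinely independent points of $W$; by the pointedness observation it passes through no more than $k$, so the count is exact. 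Membership $(c_i,\alpha_i)\in C$ is by definition the statement that $\tilde H$ weakly separates $W_1$ and $W_2$. The only subtle step is the pointedness argument together with the tight-inequality count; both fall out of the general-position hypothesis combined with $n>k$, after which everything else is standard polyhedral convex analysis.
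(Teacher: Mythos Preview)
Your argument is correct, and it takes a genuinely different route from the paper's own proof. The paper argues constructively: starting from the given separator $H$ with $\ell<k$ points of $W$ on it, it finds a perturbation direction $\delta\in\reals^k$ annihilating those $\ell$ points, tilts $H$ to $H_\gamma:\ (c+\gamma\delta)^T x=\alpha$, and increases $\gamma$ until a new point of $W$ is hit; iterating this rotation yields a separator through exactly $k$ points. Your approach instead packages all oriented separators of $(W_1,W_2)$ as a polyhedral cone in $\reals^{k+1}$, uses general position with $n>k$ to certify pointedness, and then extracts the desired $\tilde H$ as an extreme ray via Minkowski--Weyl. The paper's proof is elementary and self-contained (no black boxes beyond linear algebra), and it is essentially algorithmic---useful in a paper that ultimately enumerates such hyperplanes. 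Your proof is less constructive but conceptually cleaner: it explains in one stroke why the hypotheses are exactly right (pointedness is equivalent to ``not all of $W$ on one hyperplane''; the extreme-ray tight-count matches $k$ precisely by general position), and it would generalize more readily to related separation statements. One small point worth making explicit in your write-up is that a point $w\in W_1\cap W_2$, if allowed, contributes two constraint rows $(w,-1)$ and $(-w,1)$ that are scalar multiples, so the rank-$k$ tight system at an extreme ray still corresponds to $k$ \emph{distinct} points of $W$.
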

\begin{proof}
 Let $S\subseteq W$ be the set of points in $W$ which already lie on $H$. Since $W$ is in general position, 
 $|S|\leq k$. If $|S|=k$ already, we are done. Otherwise, $|S|=l<k$. 
 Let $S=\{w_{i_1},\dots,w_{i_l}\}$. Moreover, let $H$ be represented by the linear equation $c^Tx = \alpha$, 
 where $c \in \reals^k$ and $\alpha\in\reals$. Since $l<k$, there exists a vector $\delta \in \reals^k$ satisfying the system of $l$ linear equations $\delta^T w_{i_1} = \dots = \delta^T w_{i_l} = 0$.
 Let $\gamma \in \reals$. Consider the hyperplane $H_{\gamma} := (c + \gamma \delta)^T x =\alpha$. Clearly, $S$ lies on $H_\gamma$. 
 We select $\gamma$ suitably as follows. We slowly increase (or decrease) the value of $\gamma$ from zero such that the hyperplane $H_{\gamma}$ hits a point $w \in W \setdiff S$ for the first time. 
 Therefore, we obtain a new hyperplane $H_{\gamma}$ such that the set $S \cup \{w\}$ lies on $H'$. Moreover, it is easy to check that (a) for every point $w \notin S \cup \{w\}$, 
 the relative position w.r.t $H_{\gamma}$ is same as the relative position w.r.t $H$ (in terms of the halfspaces $H^+$, $H^-$) and hence, (b)
 if $W_1,W_2$ is a weak linear separation of $W$ by $H$, it remains a weak linear separation of $H'$.  
 Repeating this argument, we ultimately obtain a hyperplane $\tilde{H}$ which passes through a set $W'$ of size $k$, satisfying $W \supset W'\supset S$.
 Hence, proved.

\end{proof}

\bibliographystyle{plain}
\bibliography{rgi}

\end{document}